\let\pdfoutput=\undefined\fi
\chardef\@x10\chardef\@xv60
\def\tcitime{
\def\@time{%
  \@minute\time\@hour\@minute\divide\@hour\@xv
  \ifnum\@hour<\@x 0\fi\the\@hour:%
  \multiply\@hour\@xv\advance\@minute-\@hour
  \ifnum\@minute<\@x 0\fi\the\@minute
  }}%
\def\x@hyperref#1#2#3{%
   \catcode`\~ = 12
   \catcode`\$ = 12
   \catcode`\_ = 12
   \catcode`\# = 12
   \catcode`\& = 12
   \catcode`\% = 12
   \y@hyperref{#1}{#2}{#3}%
}
\def\y@hyperref#1#2#3#4{%
   #2\ref{#4}#3
   \catcode`\~ = 13
   \catcode`\$ = 3
   \catcode`\_ = 8
   \catcode`\# = 6
   \catcode`\& = 4
   \catcode`\% = 14
}
\def\QCTOpt[#1]#2{%
  \def\QCTOptB{#1}
  \def\QCTOptA{#2}
}
\def\QCTNOpt#1{%
  \def\QCTOptA{#1}
  \let\QCTOptB\empty
}
\def\Qct{%
  \@ifnextchar[{%
    \QCTOpt}{\QCTNOpt}
}
\def\QCBOpt[#1]#2{%
  \def\QCBOptB{#1}%
  \def\QCBOptA{#2}%
}
\def\QCBNOpt#1{%
  \def\QCBOptA{#1}%
  \let\QCBOptB\empty
}
\def\Qcb{%
  \@ifnextchar[{%
    \QCBOpt}{\QCBNOpt}%
}
\def\PrepCapArgs{%
  \ifx\QCBOptA\empty
    \ifx\QCTOptA\empty
      {}%
    \else
      \ifx\QCTOptB\empty
        {\QCTOptA}%
      \else
        [\QCTOptB]{\QCTOptA}%
      \fi
    \fi
  \else
    \ifx\QCBOptA\empty
      {}%
    \else
      \ifx\QCBOptB\empty
        {\QCBOptA}%
      \else
        [\QCBOptB]{\QCBOptA}%
      \fi
    \fi
  \fi
}
\def\GRAPHICSPS#1{%
 \ifcase\GRAPHICSTYPE
   \special{ps: #1}%
 \or
   \special{language "PS", include "#1"}%
 \fi
}%
\def\graffile#1#2#3#4{%
    \bgroup
	   \@inlabelfalse
       \leavevmode
       \@ifundefined{bbl@deactivate}{\def~{\string~}}{\activesoff}%
        \raise -#4 \BOXTHEFRAME{%
           \hbox to #2{\raise #3\hbox to #2{\null #1\hfil}}}%
    \egroup
}%
\def\draftbox#1#2#3#4{%
 \leavevmode\raise -#4 \hbox{%
  \frame{\rlap{\protect\tiny #1}\hbox to #2%
   {\vrule height#3 width\z@ depth\z@\hfil}%
  }%
 }%
}%
\let\nographics=\@msidraft
\newif\ifwasdraft
\def\GRAPHIC#1#2#3#4#5{%
   \ifnum\@msidraft=\@ne\draftbox{#2}{#3}{#4}{#5}%
   \else\graffile{#1}{#3}{#4}{#5}%
   \fi
}
\def\addtoLaTeXparams#1{%
    \edef\LaTeXparams{\LaTeXparams #1}}%
\newif\ifBoxFrame \BoxFramefalse
\newif\ifOverFrame \OverFramefalse
\newif\ifUnderFrame \UnderFramefalse
\def\BOXTHEFRAME#1{%
   \hbox{%
      \ifBoxFrame
         \frame{#1}%
      \else
         {#1}%
      \fi
   }%
}
\def\doFRAMEparams#1{\BoxFramefalse\OverFramefalse\UnderFramefalse\readFRAMEparams#1\end}%
\def\readFRAMEparams#1{%
 \ifx#1\end%
  \let\next=\relax
  \else
  \ifx#1i\dispkind=\z@\fi
  \ifx#1d\dispkind=\@ne\fi
  \ifx#1f\dispkind=\tw@\fi
  \ifx#1t\addtoLaTeXparams{t}\fi
  \ifx#1b\addtoLaTeXparams{b}\fi
  \ifx#1p\addtoLaTeXparams{p}\fi
  \ifx#1h\addtoLaTeXparams{h}\fi
  \ifx#1X\BoxFrametrue\fi
  \ifx#1O\OverFrametrue\fi
  \ifx#1U\UnderFrametrue\fi
  \ifx#1w
    \ifnum\@msidraft=1\wasdrafttrue\else\wasdraftfalse\fi
    \@msidraft=\@ne
  \fi
  \let\next=\readFRAMEparams
  \fi
 \next
 }%
\def\IFRAME#1#2#3#4#5#6{%
      \bgroup
      \let\QCTOptA\empty
      \let\QCTOptB\empty
      \let\QCBOptA\empty
      \let\QCBOptB\empty
      #6%
      \parindent=0pt
      \leftskip=0pt
      \rightskip=0pt
      \setbox0=\hbox{\QCBOptA}%
      \@tempdima=#1\relax
      \ifOverFrame
          \typeout{This is not implemented yet}%
          \show\HELP
      \else
         \ifdim\wd0>\@tempdima
            \advance\@tempdima by \@tempdima
            \ifdim\wd0 >\@tempdima
               \setbox1 =\vbox{%
                  \unskip\hbox to \@tempdima{\hfill\GRAPHIC{#5}{#4}{#1}{#2}{#3}\hfill}%
                  \unskip\hbox to \@tempdima{\parbox[b]{\@tempdima}{\QCBOptA}}%
               }%
               \wd1=\@tempdima
            \else
               \textwidth=\wd0
               \setbox1 =\vbox{%
                 \noindent\hbox to \wd0{\hfill\GRAPHIC{#5}{#4}{#1}{#2}{#3}\hfill}\\%
                 \noindent\hbox{\QCBOptA}%
               }%
               \wd1=\wd0
            \fi
         \else
            \ifdim\wd0>0pt
              \hsize=\@tempdima
              \setbox1=\vbox{%
                \unskip\GRAPHIC{#5}{#4}{#1}{#2}{0pt}%
                \break
                \unskip\hbox to \@tempdima{\hfill \QCBOptA\hfill}%
              }%
              \wd1=\@tempdima
           \else
              \hsize=\@tempdima
              \setbox1=\vbox{%
                \unskip\GRAPHIC{#5}{#4}{#1}{#2}{0pt}%
              }%
              \wd1=\@tempdima
           \fi
         \fi
         \@tempdimb=\ht1
         \advance\@tempdimb by -#2
         \advance\@tempdimb by #3
         \leavevmode
         \raise -\@tempdimb \hbox{\box1}%
      \fi
      \egroup%
}%
\def\DFRAME#1#2#3#4#5{%
  \vspace\topsep
  \hfil\break
  \bgroup
     \leftskip\@flushglue
	 \rightskip\@flushglue
	 \parindent\z@
	 \parfillskip\z@skip
     \let\QCTOptA\empty
     \let\QCTOptB\empty
     \let\QCBOptA\empty
     \let\QCBOptB\empty
	 \vbox\bgroup
        \ifOverFrame 
           #5\QCTOptA\par
        \fi
        \GRAPHIC{#4}{#3}{#1}{#2}{\z@}%
        \ifUnderFrame 
           \break#5\QCBOptA
        \fi
	 \egroup
  \egroup
  \vspace\topsep
  \break
}%
\def\FFRAME#1#2#3#4#5#6#7{%
  \@ifundefined{floatstyle}
    {
     \begin{figure}[#1]%
    }
    {
	 \ifx#1h
      \begin{figure}[H]%
	 \else
      \begin{figure}[#1]%
	 \fi
	}
  \let\QCTOptA\empty
  \let\QCTOptB\empty
  \let\QCBOptA\empty
  \let\QCBOptB\empty
  \ifOverFrame
    #4
    \ifx\QCTOptA\empty
    \else
      \ifx\QCTOptB\empty
        \caption{\QCTOptA}%
      \else
        \caption[\QCTOptB]{\QCTOptA}%
      \fi
    \fi
    \ifUnderFrame\else
      \label{#5}%
    \fi
  \else
    \UnderFrametrue%
  \fi
  \begin{center}\GRAPHIC{#7}{#6}{#2}{#3}{\z@}\end{center}%
  \ifUnderFrame
    #4
    \ifx\QCBOptA\empty
      \caption{}%
    \else
      \ifx\QCBOptB\empty
        \caption{\QCBOptA}%
      \else
        \caption[\QCBOptB]{\QCBOptA}%
      \fi
    \fi
    \label{#5}%
  \fi
  \end{figure}%
 }%
\def\makeactives{
  \catcode`\"=\active
  \catcode`\;=\active
  \catcode`\:=\active
  \catcode`\'=\active
  \catcode`\~=\active
}
   \gdef\activesoff{%
      \def"{\string"}%
      \def;{\string;}%
      \def:{\string:}%
      \def'{\string'}%
      \def~{\string~}%
    }
\def\FRAME#1#2#3#4#5#6#7#8{%
 \bgroup
 \ifnum\@msidraft=\@ne
   \wasdrafttrue
 \else
   \wasdraftfalse%
 \fi
 \def\LaTeXparams{}%
 \dispkind=\z@
 \def\LaTeXparams{}%
 \doFRAMEparams{#1}%
 \ifnum\dispkind=\z@\IFRAME{#2}{#3}{#4}{#7}{#8}{#5}\else
  \ifnum\dispkind=\@ne\DFRAME{#2}{#3}{#7}{#8}{#5}\else
   \ifnum\dispkind=\tw@
    \edef\@tempa{\noexpand\FFRAME{\LaTeXparams}}%
    \@tempa{#2}{#3}{#5}{#6}{#7}{#8}%
    \fi
   \fi
  \fi
  \ifwasdraft\@msidraft=1\else\@msidraft=0\fi{}%
  \egroup
 }%
\def\TEXUX#1{"texux"}
\long\def\QQQ#1#2{%
     \long\expandafter\def\csname#1\endcsname{#2}}%
\long\def\QQA#1#2{}%
\def\QTR#1#2{{\csname#1\endcsname {#2}}}%
\def\EXPAND#1[#2]#3{}%
\def\NOEXPAND#1[#2]#3{}%
\def\LaTeXparent#1{}%
\def\ChildStyles#1{}%
\def\ChildDefaults#1{}%
\def\QTagDef#1#2#3{}%
  \providecommand{\UNICODE}[2][]{\protect\rule{.1in}{.1in}}
  \providecommand{\U}[1]{\protect\rule{.1in}{.1in}}
\def\QQfnmark#1{\footnotemark}
 \def\abstract{%
  \if@twocolumn
   \section*{Abstract (Not appropriate in this style!)}%
   \else \small 
   \begin{center}{\bf Abstract\vspace{-.5em}\vspace{\z@}}\end{center}%
   \quotation 
   \fi
  }%
   \def\registered{\relax\ifmmode{}\r@gistered
                    \else$\m@th\r@gistered$\fi}%
 \def\r@gistered{^{\ooalign
  {\hfil\raise.07ex\hbox{$\scriptstyle\rm\text{R}$}\hfil\crcr
  \mathhexbox20D}}}}{}%
\newdimen\theight
\def\newfmtname{LaTeX2e}
  \DeclareOldFontCommand{\rm}{\normalfont\rmfamily}{\mathrm}
  \DeclareOldFontCommand{\sf}{\normalfont\sffamily}{\mathsf}
  \DeclareOldFontCommand{\tt}{\normalfont\ttfamily}{\mathtt}
  \DeclareOldFontCommand{\bf}{\normalfont\bfseries}{\mathbf}
  \DeclareOldFontCommand{\it}{\normalfont\itshape}{\mathit}
  \DeclareOldFontCommand{\sl}{\normalfont\slshape}{\@nomath\sl}
  \DeclareOldFontCommand{\sc}{\normalfont\scshape}{\@nomath\sc}
\def\alpha{{\Greekmath 010B}}%
\def\beta{{\Greekmath 010C}}%
\def\gamma{{\Greekmath 010D}}%
\def\delta{{\Greekmath 010E}}%
\def\epsilon{{\Greekmath 010F}}%
\def\zeta{{\Greekmath 0110}}%
\def\eta{{\Greekmath 0111}}%
\def\theta{{\Greekmath 0112}}%
\def\iota{{\Greekmath 0113}}%
\def\kappa{{\Greekmath 0114}}%
\def\lambda{{\Greekmath 0115}}%
\def\mu{{\Greekmath 0116}}%
\def\nu{{\Greekmath 0117}}%
\def\xi{{\Greekmath 0118}}%
\def\pi{{\Greekmath 0119}}%
\def\rho{{\Greekmath 011A}}%
\def\sigma{{\Greekmath 011B}}%
\def\tau{{\Greekmath 011C}}%
\def\upsilon{{\Greekmath 011D}}%
\def\phi{{\Greekmath 011E}}%
\def\chi{{\Greekmath 011F}}%
\def\psi{{\Greekmath 0120}}%
\def\omega{{\Greekmath 0121}}%
\def\varepsilon{{\Greekmath 0122}}%
\def\vartheta{{\Greekmath 0123}}%
\def\varpi{{\Greekmath 0124}}%
\def\varrho{{\Greekmath 0125}}%
\def\varsigma{{\Greekmath 0126}}%
\def\varphi{{\Greekmath 0127}}%
\def\nabla{{\Greekmath 0272}}
\def\FindBoldGroup{%
   {\setbox0=\hbox{$\mathbf{x\global\edef\theboldgroup{\the\mathgroup}}$}}%
}
\def\Greekmath#1#2#3#4{%
    \if@compatibility
        \ifnum\mathgroup=\symbold
           \mathchoice{\mbox{\boldmath$\displaystyle\mathchar"#1#2#3#4$}}%
                      {\mbox{\boldmath$\textstyle\mathchar"#1#2#3#4$}}%
                      {\mbox{\boldmath$\scriptstyle\mathchar"#1#2#3#4$}}%
                      {\mbox{\boldmath$\scriptscriptstyle\mathchar"#1#2#3#4$}}%
        \else
           \mathchar"#1#2#3#4%
        \fi 
    \else 
        \FindBoldGroup
        \ifnum\mathgroup=\theboldgroup 
           \mathchoice{\mbox{\boldmath$\displaystyle\mathchar"#1#2#3#4$}}%
                      {\mbox{\boldmath$\textstyle\mathchar"#1#2#3#4$}}%
                      {\mbox{\boldmath$\scriptstyle\mathchar"#1#2#3#4$}}%
                      {\mbox{\boldmath$\scriptscriptstyle\mathchar"#1#2#3#4$}}%
        \else
           \mathchar"#1#2#3#4%
        \fi     	    
	  \fi}
\newif\ifGreekBold  \GreekBoldfalse
\let\SAVEPBF=\pbf
\def\pbf{\GreekBoldtrue\SAVEPBF}%
  \newcounter{equationnumber}  
  \def\mathletters{%
     \addtocounter{equation}{1}
     \edef\@currentlabel{\theequation}%
     \setcounter{equationnumber}{\c@equation}
     \setcounter{equation}{0}%
     \edef\theequation{\@currentlabel\noexpand\alph{equation}}%
  }
    \def\BibTeX{{\rm B\kern-.05em{\sc i\kern-.025em b}\kern-.08em
                 T\kern-.1667em\lower.7ex\hbox{E}\kern-.125emX}}}{}%
\def\AmS{{\protect\usefont{OMS}{cmsy}{m}{n}%
                A\kern-.1667em\lower.5ex\hbox{M}\kern-.125emS}}}{}%
\def\@@eqncr{\let\@tempa\relax
    \ifcase\@eqcnt \def\@tempa{& & &}\or \def\@tempa{& &}%
      \else \def\@tempa{&}\fi
     \@tempa
     \if@eqnsw
        \iftag@
           \@taggnum
        \else
           \@eqnnum\stepcounter{equation}%
        \fi
     \fi
     \global\tag@false
     \global\@eqnswtrue
     \global\@eqcnt\z@\cr}
\def\TCItag{\@ifnextchar*{\@TCItagstar}{\@TCItag}}
\def\@TCItag#1{%
    \global\tag@true
    \global\def\@taggnum{(#1)}%
    \global\def\@currentlabel{#1}}
\def\@TCItagstar*#1{%
    \global\tag@true
    \global\def\@taggnum{#1}%
    \global\def\@currentlabel{#1}}
\def\tint{\msi@int\textstyle\int}%
\def\tiint{\msi@int\textstyle\iint}%
\def\tiiint{\msi@int\textstyle\iiint}%
\def\tiiiint{\msi@int\textstyle\iiiint}%
\def\tidotsint{\msi@int\textstyle\idotsint}%
\def\toint{\msi@int\textstyle\oint}%
\newtoks\temptoksa
\newtoks\temptoksb
\newtoks\temptoksc
\def\msi@int#1#2{%
 \def\@temp{{#1#2\the\temptoksc_{\the\temptoksa}^{\the\temptoksb}}}%
 \futurelet\@nextcs
 \@int
}
\def\@int{%
   \ifx\@nextcs\limits
      \typeout{Found limits}%
      \temptoksc={\limits}%
	  \let\@next\@intgobble%
   \else\ifx\@nextcs\nolimits
      \typeout{Found nolimits}%
      \temptoksc={\nolimits}%
	  \let\@next\@intgobble%
   \else
      \typeout{Did not find limits or no limits}%
      \temptoksc={}%
      \let\@next\msi@limits%
   \fi\fi
   \@next   
}%
\def\@intgobble#1{%
   \typeout{arg is #1}%
   \msi@limits
}
\def\msi@limits{%
   \temptoksa={}%
   \temptoksb={}%
   \@ifnextchar_{\@limitsa}{\@limitsb}%
}
\def\@limitsa_#1{%
   \temptoksa={#1}%
   \@ifnextchar^{\@limitsc}{\@temp}%
}
\def\@limitsb{%
   \@ifnextchar^{\@limitsc}{\@temp}%
}
\def\@limitsc^#1{%
   \temptoksb={#1}%
   \@ifnextchar_{\@limitsd}{\@temp}%
}
\def\@limitsd_#1{%
   \temptoksa={#1}%
   \@temp
}
\def\dint{\msi@int\displaystyle\int}%
\def\diint{\msi@int\displaystyle\iint}%
\def\diiint{\msi@int\displaystyle\iiint}%
\def\diiiint{\msi@int\displaystyle\iiiint}%
\def\didotsint{\msi@int\displaystyle\idotsint}%
\def\doint{\msi@int\displaystyle\oint}%
\def\ExitTCILatex{\makeatother }
\if@compatibility\message{amsmath already loaded}\fi\aftergroup\ExitTCILatex}
\if@compatibility\message{amstex already loaded}\fi\aftergroup\ExitTCILatex}
\if@compatibility\message{amsgen already loaded}\fi\aftergroup\ExitTCILatex}
\let\DOTSI\relax
\def\RIfM@{\relax\ifmmode}%
\def\FN@{\futurelet\next}%
\def\iint{\DOTSI\intno@\tw@\FN@\ints@}%
\def\iiint{\DOTSI\intno@\thr@@\FN@\ints@}%
\def\iiiint{\DOTSI\intno@4 \FN@\ints@}%
\def\idotsint{\DOTSI\intno@\z@\FN@\ints@}%
\def\ints@{\findlimits@\ints@@}%
\newif\iflimtoken@
\newif\iflimits@
\def\findlimits@{\limtoken@true\ifx\next\limits\limits@true
 \else\ifx\next\nolimits\limits@false\else
 \limtoken@false\ifx\ilimits@\nolimits\limits@false\else
 \ifinner\limits@false\else\limits@true\fi\fi\fi\fi}%
\def\multint@{\int\ifnum\intno@=\z@\intdots@                          
 \else\intkern@\fi                                                    
 \ifnum\intno@>\tw@\int\intkern@\fi                                   
 \ifnum\intno@>\thr@@\int\intkern@\fi                                 
 \int}
\def\multintlimits@{\intop\ifnum\intno@=\z@\intdots@\else\intkern@\fi
 \ifnum\intno@>\tw@\intop\intkern@\fi
 \ifnum\intno@>\thr@@\intop\intkern@\fi\intop}%
\def\intic@{%
    \mathchoice{\hskip.5em}{\hskip.4em}{\hskip.4em}{\hskip.4em}}%
\def\negintic@{\mathchoice
 {\hskip-.5em}{\hskip-.4em}{\hskip-.4em}{\hskip-.4em}}%
\def\ints@@{\iflimtoken@                                              
 \def\ints@@@{\iflimits@\negintic@
   \mathop{\intic@\multintlimits@}\limits                             
  \else\multint@\nolimits\fi                                          
  \eat@}
 \else                                                                
 \def\ints@@@{\iflimits@\negintic@
  \mathop{\intic@\multintlimits@}\limits\else
  \multint@\nolimits\fi}\fi\ints@@@}%
\def\intkern@{\mathchoice{\!\!\!}{\!\!}{\!\!}{\!\!}}%
\def\plaincdots@{\mathinner{\cdotp\cdotp\cdotp}}%
\def\intdots@{\mathchoice{\plaincdots@}%
 {{\cdotp}\mkern1.5mu{\cdotp}\mkern1.5mu{\cdotp}}%
 {{\cdotp}\mkern1mu{\cdotp}\mkern1mu{\cdotp}}%
 {{\cdotp}\mkern1mu{\cdotp}\mkern1mu{\cdotp}}}%
\def\RIfM@{\relax\protect\ifmmode}
\def\text{\RIfM@\expandafter\text@\else\expandafter\mbox\fi}
\let\nfss@text\text
\def\text@#1{\mathchoice
   {\textdef@\displaystyle\f@size{#1}}%
   {\textdef@\textstyle\tf@size{\firstchoice@false #1}}%
   {\textdef@\textstyle\sf@size{\firstchoice@false #1}}%
   {\textdef@\textstyle \ssf@size{\firstchoice@false #1}}%
   \glb@settings}
\def\textdef@#1#2#3{\hbox{{%
                    \everymath{#1}%
                    \let\f@size#2\selectfont
                    #3}}}
\newif\iffirstchoice@
\def\Let@{\relax\iffalse{\fi\let\\=\cr\iffalse}\fi}%
\def\vspace@{\def\vspace##1{\crcr\noalign{\vskip##1\relax}}}%
\def\multilimits@{\bgroup\vspace@\Let@
 \baselineskip\fontdimen10 \scriptfont\tw@
 \advance\baselineskip\fontdimen12 \scriptfont\tw@
 \lineskip\thr@@\fontdimen8 \scriptfont\thr@@
 \lineskiplimit\lineskip
 \vbox\bgroup\ialign\bgroup\hfil$\m@th\scriptstyle{##}$\hfil\crcr}%
\def\Sb{_\multilimits@}%
\def\endSb{\crcr\egroup\egroup\egroup}%
\def\Sp{^\multilimits@}%
\newdimen\ex@
\def\rightarrowfill@#1{$#1\m@th\mathord-\mkern-6mu\cleaders
 \hbox{$#1\mkern-2mu\mathord-\mkern-2mu$}\hfill
 \mkern-6mu\mathord\rightarrow$}%
\def\leftarrowfill@#1{$#1\m@th\mathord\leftarrow\mkern-6mu\cleaders
 \hbox{$#1\mkern-2mu\mathord-\mkern-2mu$}\hfill\mkern-6mu\mathord-$}%
\def\leftrightarrowfill@#1{$#1\m@th\mathord\leftarrow
\mkern-6mu\cleaders
 \hbox{$#1\mkern-2mu\mathord-\mkern-2mu$}\hfill
 \mkern-6mu\mathord\rightarrow$}%
\def\overrightarrow{\mathpalette\overrightarrow@}%
\def\overrightarrow@#1#2{\vbox{\ialign{##\crcr\rightarrowfill@#1\crcr
 \noalign{\kern-\ex@\nointerlineskip}$\m@th\hfil#1#2\hfil$\crcr}}}%
\def\overleftarrow{\mathpalette\overleftarrow@}%
\def\overleftarrow@#1#2{\vbox{\ialign{##\crcr\leftarrowfill@#1\crcr
 \noalign{\kern-\ex@\nointerlineskip}$\m@th\hfil#1#2\hfil$\crcr}}}%
\def\overleftrightarrow{\mathpalette\overleftrightarrow@}%
\def\overleftrightarrow@#1#2{\vbox{\ialign{##\crcr
   \leftrightarrowfill@#1\crcr
 \noalign{\kern-\ex@\nointerlineskip}$\m@th\hfil#1#2\hfil$\crcr}}}%
\def\underrightarrow{\mathpalette\underrightarrow@}%
\def\underrightarrow@#1#2{\vtop{\ialign{##\crcr$\m@th\hfil#1#2\hfil
  $\crcr\noalign{\nointerlineskip}\rightarrowfill@#1\crcr}}}%
\def\underleftarrow{\mathpalette\underleftarrow@}%
\def\underleftarrow@#1#2{\vtop{\ialign{##\crcr$\m@th\hfil#1#2\hfil
  $\crcr\noalign{\nointerlineskip}\leftarrowfill@#1\crcr}}}%
\def\underleftrightarrow{\mathpalette\underleftrightarrow@}%
\def\underleftrightarrow@#1#2{\vtop{\ialign{##\crcr$\m@th
  \hfil#1#2\hfil$\crcr
 \noalign{\nointerlineskip}\leftrightarrowfill@#1\crcr}}}%
\def\qopnamewl@#1{\mathop{\operator@font#1}\nlimits@}
\let\nlimits@\displaylimits
\def\setboxz@h{\setbox\z@\hbox}
\def\varlim@#1#2{\mathop{\vtop{\ialign{##\crcr
 \hfil$#1\m@th\operator@font lim$\hfil\crcr
 \noalign{\nointerlineskip}#2#1\crcr
 \noalign{\nointerlineskip\kern-\ex@}\crcr}}}}
 \def\rightarrowfill@#1{\m@th\setboxz@h{$#1-$}\ht\z@\z@
  $#1\copy\z@\mkern-6mu\cleaders
  \hbox{$#1\mkern-2mu\box\z@\mkern-2mu$}\hfill
  \mkern-6mu\mathord\rightarrow$}
\def\leftarrowfill@#1{\m@th\setboxz@h{$#1-$}\ht\z@\z@
  $#1\mathord\leftarrow\mkern-6mu\cleaders
  \hbox{$#1\mkern-2mu\copy\z@\mkern-2mu$}\hfill
  \mkern-6mu\box\z@$}
\def\projlim{\qopnamewl@{proj\,lim}}
\def\injlim{\qopnamewl@{inj\,lim}}
\def\varinjlim{\mathpalette\varlim@\rightarrowfill@}
\def\varprojlim{\mathpalette\varlim@\leftarrowfill@}
\def\varliminf{\mathpalette\varliminf@{}}
\def\varliminf@#1{\mathop{\underline{\vrule\@depth.2\ex@\@width\z@
   \hbox{$#1\m@th\operator@font lim$}}}}
\def\varlimsup{\mathpalette\varlimsup@{}}
\def\varlimsup@#1{\mathop{\overline
  {\hbox{$#1\m@th\operator@font lim$}}}}
\def\align{\@verbatim \frenchspacing\@vobeyspaces \@alignverbatim
You are using the "align" environment in a style in which it is not defined.}
\let\csname endalign*\endcsname =\endtrivlist
\def\alignat{\@verbatim \frenchspacing\@vobeyspaces \@alignatverbatim
You are using the "alignat" environment in a style in which it is not defined.}
\let\csname endalignat*\endcsname =\endtrivlist
\def\xalignat{\@verbatim \frenchspacing\@vobeyspaces \@xalignatverbatim
You are using the "xalignat" environment in a style in which it is not defined.}
\let\csname endxalignat*\endcsname =\endtrivlist
\def\gather{\@verbatim \frenchspacing\@vobeyspaces \@gatherverbatim
You are using the "gather" environment in a style in which it is not defined.}
\let\csname endgather*\endcsname =\endtrivlist
\def\multiline{\@verbatim \frenchspacing\@vobeyspaces \@multilineverbatim
You are using the "multiline" environment in a style in which it is not defined.}
\let\csname endmultiline*\endcsname =\endtrivlist
\def\arrax{\@verbatim \frenchspacing\@vobeyspaces \@arraxverbatim
You are using a type of "array" construct that is only allowed in AmS-LaTeX.}
\def\tabulax{\@verbatim \frenchspacing\@vobeyspaces \@tabulaxverbatim
You are using a type of "tabular" construct that is only allowed in AmS-LaTeX.}
\let\csname endarrax*\endcsname =\endtrivlist
\let\csname endtabulax*\endcsname =\endtrivlist
 \def\endequation{%
     \ifmmode\ifinner 
      \iftag@
        \addtocounter{equation}{-1} 
        $\hfil
           \displaywidth\linewidth\@taggnum\egroup \endtrivlist
        \global\tag@false
        \global\@ignoretrue   
      \else
        $\hfil
           \displaywidth\linewidth\@eqnnum\egroup \endtrivlist
        \global\tag@false
        \global\@ignoretrue 
      \fi
     \else   
      \iftag@
        \addtocounter{equation}{-1} 
        \eqno \hbox{\@taggnum}
        \global\tag@false%
        $$\global\@ignoretrue
      \else
        \eqno \hbox{\@eqnnum}
        $$\global\@ignoretrue
      \fi
     \fi\fi
 } 
 \newif\iftag@ \tag@false
 \def\TCItag{\@ifnextchar*{\@TCItagstar}{\@TCItag}}
 \def\@TCItag#1{%
     \global\tag@true
     \global\def\@taggnum{(#1)}%
     \global\def\@currentlabel{#1}}
 \def\@TCItagstar*#1{%
     \global\tag@true
     \global\def\@taggnum{#1}%
     \global\def\@currentlabel{#1}}
     \def\tag{\@ifnextchar*{\@tagstar}{\@tag}}
     \def\@tag#1{%
         \global\tag@true
         \global\def\@taggnum{(#1)}}
     \def\@tagstar*#1{%
         \global\tag@true
         \global\def\@taggnum{#1}}
\newcommand{\eop}{{\nopagebreak\vspace{1ex}\nopagebreak
\hspace*{\fill}\mbox{$\Box$}}}
\renewenvironment{proof}{{\noindent\bf Proof.} }{\eop}
\begin{document}

\title{Cooperating Distributed Grammar Systems of Finite Index Working in
Hybrid Modes}
\author{Henning Fernau
\institute{Fachbereich 4---Abteilung Informatik\\
Universit{\"a}t Trier\\
D-54286 Trier, Germany}
\email{fernau@uni-trier.de}
\and Rudolf Freund 
\institute{Institut f\"{u}r Computersprachen\\
Technische Universit\"{a}t Wien\\
Favoritenstr. 9, A-1040 Wien, Austria}
\email{rudi@emcc.at}
\and Markus Holzer
\institute{Institut f\"ur Informatik\\
Universit\"at Gie\ss{}en,\\
  Arndtstra\ss{}e~2, D-35392 Gie\ss{}en, Germany}
\email{holzer@informatik.uni-giessen.de}
}

\def\titlerunning{Cooperating Distributed Grammar Systems of Finite Index 
Working in Hybrid Modes} 
\def\authorrunning{Henning Fernau, Rudolf Freund, and Markus Holzer}
\maketitle

\begin{abstract}
We study cooperating distributed grammar systems working in hybrid modes in
connection with the finite index restriction in two different ways: firstly,
we investigate cooperating distributed grammar systems working in hybrid
modes which characterize programmed grammars with the finite index
restriction; looking at the number of components of such systems, we obtain
surprisingly rich lattice structures for the inclusion relations between the
corresponding language families. Secondly, we impose the finite index
restriction on cooperating distributed grammar systems working in hybrid
modes themselves, which leads us to new characterizations of programmed
grammars of finite index.
\end{abstract}


\underline{Keywords:} CD grammar systems; finite index; hybrid modes;
programmed grammars

\underline{AMS MSC[2010] classification:} 68Q42; 68Q45

\section{Introduction}

\label{sec-intro}

Cooperating distributed (CD) grammar systems first were introduced in~\cite%
{MeeRoz78} with motivations related to two-level grammars. Later, the
investigation of CD grammar systems became a vivid area of research after
relating CD grammar systems with Artificial Intelligence (AI) notions~\cite%
{CsuDas90}, such as multi-agent systems or blackboard models for problem
solving. From this point of view, motivations for CD grammar systems can be
summarized as follows: several grammars (agents or experts in the framework
of AI), mainly consisting of rule sets (corresponding to scripts the agents
have to obey to) are cooperating in order to work on a sentential form
(representing their common work), finally generating terminal words (in this
way solving the problem). The picture one has in mind is that of several
grammars (mostly, these are simply classical context-free grammars called
\textquotedblleft components\textquotedblright\ in the theory of CD grammar
systems) \textquotedblleft sitting\textquotedblright\ around a table where
there is lying the common workpiece, a sentential form. Some component takes
this sentential form, works on it, i.e., it performs some derivation steps,
and then returns it onto the table such that another component may continue
the work.

In classical CD grammar systems, all components work in the same derivation
mode. It is of course natural to alleviate this requirement, because it
simply refers to different capabilities and working regulations of different
experts in the original CD motivation. This leads to the notion of so-called
hybrid CD grammar systems introduced by Mitrana and P\u{a}un in \cite%
{Mit93,Pau94}. We investigate internally hybrid derivation modes which
partly allow for new characterizations of the external hybridizations
explained above. This paper belongs to a series of papers on hybrid modes in
CD grammar systems: as predecessors, we mention that \cite{FerFre96}
introduces hybrid modes in CD array grammar systems as a natural
specification tool for array languages and \cite{FerHol96b}~investigates
accepting CD grammar systems with hybrid modes; the two most relevant papers
are \cite{FerFreHol01,FerHolFre03} where the most important aspects of
internal and external mode hybridizations are discussed for the case of word
languages.

Here, we will continue this line of research, focussing on the finite index
restriction. The paper is organized as follows. In the next section, we
introduce the necessary notions. In Section~\ref{sec-finite-index}, we
review important notions and results in connection with the finite index
restriction. Section~\ref{sec-counting-components} is devoted to the study
of internally hybrid CD grammar systems with the (explicit) restriction of
being of finite index; we establish infinite hierarchies with respect to the
number of components and the number of maximal derivation steps per
component. In Section~\ref{sec-CDGS-finite-index}, we refine our previous
analysis (published in \cite{FerHolFre03}) showing characterizations of
programmed grammars of finite index by several variants of (internally)
hybrid CD grammar systems, also considering the number of grammar components
as an additional descriptional complexity parameter. In the last section, we
review our results again and give a prospect on possible future work.

\section{Definitions}

\label{defs}

We assume the reader to be familiar with some basic notions of formal
language theory and regulated rewriting, as contained in \cite%
{RozenbergSalomaa1997} and \cite{DasPau89}. In particular, details on
programmed grammars can be found there. In general, we have the following
conventions: $\subseteq $ denotes inclusion, while~$\subset $ denotes strict
inclusion; the set of positive integers is denoted by ${\mathbb{N}}$. The
empty word is denoted by~$\lambda $; $|\alpha |_{A}$ denotes the number of
occurrences of the symbol $A$ in~$\alpha $. We consider two languages $%
L_{1},L_{2}$ to be equal if and only if $L_{1}\setminus \{\lambda
\}=L_{2}\setminus \{\lambda \}$, and we simply write $L_{1}=L_{2}$ in this
case. The families of languages generated by linear context-free and
context-free grammars are denoted by $\mathcal{L}(\mathrm{LIN})$ and $%
\mathcal{L}(\mathrm{CF})$, respectively, and the family of finite languages
is denoted by $\mathcal{L}(\mathrm{FIN})$. We attach $-\lambda $ in our
notations for formal language classes if erasing rules are not permitted.
Notice that we use bracket notations in order to express that the equation
holds both in case of forbidding erasing rules and in the case of admitting
erasing rules (consistently neglecting the contents between the brackets).

Next we introduce programmed grammars, a well-known concept in the area of
regulated rewriting.

A \emph{programmed grammar\/} is a septuple $G=(N,T,P,S,\Lambda ,\sigma
,\phi )$, where~$N$, $T$, and $S\in N$ are the set of nonterminals, the set
of terminals, and the start symbol, respectively. In the following we use~$%
V_{G} $ to denote the set $N\cup T$, which is the complete working alphabet
of the grammar. $P$ is the finite set of context-free rules $A\rightarrow z$
with $A\in N$ and $z\in V_{G}^{\ast }$, and~$\Lambda $ is a finite set of
labels (for the rules in~$P$), such that~$\Lambda $ can also be interpreted
as a function which outputs a rule when being given a label; $\sigma $ and $%
\phi $ are functions from~$\Lambda $ into the set of subsets of~$\Lambda $.
For $(x,r_{1})$, $(y,r_{2})$ in $V_{G}^{\ast }\times \Lambda $ and $\Lambda
(r_{1})=(A\rightarrow z)$, we write $(x,r_{1})\Rightarrow (y,r_{2})$ if and
only if either

\begin{enumerate}
\item $x=x_{1}Ax_{2}$, $y=x_{1}zx_{2}$, and $r_{2}\in \sigma (r_{1})$, or

\item $x=y$, the rule $A\to z$ is not applicable to~$x$, and $r_{2}\in \phi
(r_{1})$.
\end{enumerate}

In the latter case, the derivation step is performed in the so-called \emph{%
\ appearance checking mode}. The set $\sigma (r_{1})$ is called success
field and the set $\phi (r_{1})$ is called failure field of~$r_{1}$. As
usual, the reflexive transitive closure of~$\Rightarrow $ is denoted by~$%
\Longrightarrow ^{\ast }$. The language generated by~$G$ is defined as 
\begin{equation*}
L(G)=\{\,w\in T^{\ast }\mid (S,r_{1})\Longrightarrow ^{\ast }(w,r_{2})\text{
for some }r_{1},r_{2}\in \Lambda \}.
\end{equation*}%
The family of languages generated by [$\lambda $-free] programmed grammars
containing only context-free rules is denoted by $\mathcal{L}(\mathrm{P},%
\mathrm{CF}[-\lambda ],\mathrm{ac})$. When no appearance checking features
are involved, i.e., $\phi (r)=\emptyset $ for each label $r\in \Lambda $, we
obtain the family $\mathcal{L}(\mathrm{P},\mathrm{CF}[-\lambda ])$.

Finally, we now define cooperating distributed (CD) and hybrid cooperating
distributed (HCD) grammar systems.

A \emph{CD grammar system\/} of degree~$n$, with $n\geq 1$, is an $(n+3)$%
-tuple $G=(N,T,S,P_{1},P_{2},\dots ,P_{n})$, where~$N$, $T$~are disjoint
alphabets of nonterminal and terminal symbols, respectively, $S\in N$ is the
start symbol, and $P_{1},\ldots ,P_{n}$ are finite sets of rewriting rules
over $N\cup T$. Throughout this paper, we consider only regular, linear
context-free, and context-free rewriting rules. For $x,y\in (N\cup T)^{\ast
} $ and $1\leq i\leq n$, we write $x\Longrightarrow _{i}y$ if and only if $%
x=x_{1}Ax_{2}$, $y=x_{1}zx_{2}$ for some $A\rightarrow z\in P_{i}$. Hence,
subscript~$i$ refers to the component to be used. Accordingly, $%
x\Longrightarrow _{i}^{m}y$ denotes an $m$-step derivation using component
number~$i$, where $x\Longrightarrow _{i}^{0}y$ if and only if $x=y$.

We define the \emph{classical basic modes\/} $B=\{\,\ast ,t\,\}\cup \{\,\leq
k,=k,\geq k\mid k\in {\mathbb{N}}\,\}$ and let 
\begin{equation*}
D=B\cup \{\,(\geq k\wedge \leq \ell )\mid k,\ell \in {\mathbb{N}},k\leq \ell
\,\}\cup \{\,(t\wedge \leq k),(t\wedge =k),(t\wedge \geq k)\mid k\in {%
\mathbb{N}}\,\}.
\end{equation*}%
For $f\in D$ we define the relation $\Longrightarrow _{i}^{f}$ by 
\begin{equation*}
x\Longrightarrow _{i}^{f}y\iff \exists m\geq 0:(x\Longrightarrow
_{i}^{m}y\wedge P(f,m,i,y)),
\end{equation*}%
where~$P$ is a predicate defined as follows (let $k\in {\mathbb{N}}$ and $%
f_{1},f_{2}\in B$): 
\begin{equation*}
\begin{array}{l|c}
\text{predicate} & \text{definition} \\ \hline\hline
P(=k,m,i,y) & m=k \\ 
P(\leq k,m,i,y) & m\leq k \\ 
P(\geq k,m,i,y) & m\geq k \\ 
P(\ast ,m,i,y) & m\geq 0 \\ 
P(t,m,i,y) & \lnot \exists z(y\Longrightarrow _{i}z) \\ 
P((f_{1}\wedge f_{2}),m,i,y)\,\, & \,\,P(f_{1},m,i,y)\wedge P(f_{2},m,i,y)%
\end{array}%
\end{equation*}

Observe that not every combination of modes as introduced above is a
genuinely hybrid mode. For example, the $(\geq k\land \leq k)$-mode is just
another notation for the $=k$-mode. Especially, $*$ may be used as a ``don't
care'' in our subsequent notations, since $P((*\land f_2),m,i,y)$ if and
only if $P(f_2,m,i,y)$.

If each component of a CD grammar system may work in a different mode, then
we get the notion of an \emph{(externally) hybrid CD (HCD) grammar system\/}
of degree~$n$, with $n\geq 1$, which is an $(n+3)$-tuple $%
G=(N,T,S,(P_{1},f_{1}),(P_{2},f_{2}),\dots ,(P_{n},f_{n}))$, where $%
N,T,S,P_{1},\dots ,P_{n}$ are as in a CD grammar system, and $f_{i}\in D$,
for $1\leq i\leq n$. Thus, we can define the language \emph{\ generated\/}
by a HCD grammar system as:%
\begin{equation*}
\begin{array}[b]{llll}
L(G) & := & \{\,w\in T^{\ast }\mid & S\Rightarrow
_{i_{1}}^{f_{i_{1}}}w_{1}\Rightarrow _{i_{2}}^{f_{i_{2}}}\ldots \Rightarrow
_{i_{m-1}}^{f_{i_{m-1}}}w_{m-1}\Rightarrow _{i_{m}}^{f_{i_{m}}}w_{m}=w \\ 
&  &  & \text{with }m\geq 1\text{, }1\leq i_{j}\leq n\text{, and }1\leq
j\leq m\,\}%
\end{array}%
\end{equation*}

If $F\subseteq D$ and $X\in \{\mathrm{LIN},\mathrm{CF}\}$, then the family
of languages generated by [$\lambda $-free] HCD grammar systems with degree
at most~$n$ using rules of type~$X$, each component working in one of the
modes contained in $F$, is denoted by $\mathcal{L}(\mathrm{HCD}%
_{n},X[-\lambda ],F)$. In a similar way, we write $\mathcal{L}(\mathrm{HCD}%
_{\infty },X[-\lambda ],F)$ when the number of components is not restricted.
If~$F$ is a singleton~$\{f\}$, we simply write $\mathcal{L}(\mathrm{CD}%
_{n},X[-\lambda ],f)$, where $n\in {\mathbb{N}}\cup \{\infty \}$;
additionally, we write~$L_{f}(G)$ instead of~$L(G)$ to denote the language
generated by the CD grammar system~$G$ in the mode $f$.

The following example is taken from \cite[Theorem~24]{FerFreHol01}, as we
need this language in the following of this paper. This should also help to
clarify our definitions.

\begin{example}
\label{exa-hybrid1} The non-context-free language $L=\{\,a_{1}^{n}a_{2}^{n}%
\ldots a_{k+1}^{n}\mid n\geq 1\,\}$ can be generated by the CD grammar
system $G=(N,T,S_{1},P_{1},P_{2})$, where $P_{1},P_{2}$ work in the $%
(t\wedge \geq k)$-mode, $k\geq 2$. For both components, we take $%
N=\{\,S_{i},A_{i},A_{i}^{\prime },\mid 1\leq i\leq k\,\}$ as nonterminal
alphabet and $T=\{a_{1},\dots ,a_{k+1}\}$ as terminal alphabet. The
components $P_{1}$ and $P_{2}$ are defined as follows: 
\begin{eqnarray*}
P_{1} &=&\{\,S_{i}\rightarrow S_{i+1}\mid 1\leq i<k\,\}\cup
\{S_{k}\rightarrow A_{1}\cdots A_{k}\}\cup  \\
&&\{\,A_{i}^{\prime }\rightarrow A_{i}\mid 1\leq i\leq k\,\}\quad \text{and}
\\
P_{2} &=&\{\,A_{i}\rightarrow a_{i}A_{i}^{\prime }\mid 1\leq i\leq
k-1\,\}\cup \{A_{k}\rightarrow a_{k}A_{k}^{\prime }a_{k+1}\}\cup  \\
&&\{\,A_{i}\rightarrow a_{i}\mid 1\leq i\leq k-1\,\}\cup \{A_{k}\rightarrow
a_{k}a_{k+1}\}.
\end{eqnarray*}%
Then we have $L(G)=L$, since every derivation of~$G$ leading to a terminal
word is of the form 
\begin{equation*}
S_{1}\Longrightarrow _{1}^{=k}A_{1}\ldots A_{k}\hspace{0.3cm}\cdots \hspace{%
0.3cm}\Longrightarrow _{2}^{=k}a_{1}^{n}\ldots a_{k}^{n}a_{k+1}^{n},
\end{equation*}%
where the intermediate steps are of the form 
\begin{equation*}
a_{1}^{i}A_{1}\ldots a_{k}^{i}A_{k}a_{k+1}^{i}\Longrightarrow
_{2}^{=k}a_{1}^{i+1}A_{1}^{\prime i+1}{}_{k}A_{k}^{\prime
i+1}{}_{k+1}\Longrightarrow _{1}^{=k}a_{1}^{i+1}A_{1}\ldots
a_{k}^{i+1}A_{k}a_{k+1}^{i+1};
\end{equation*}%
if a non-vanishing number of occurrences of~$A_{i}^{\prime }$ less than~$k$
is obtained by using $P_{2}$ then neither~$P_{1}$ nor~$P_{2}$ can perform~$k$
derivation steps any more. Hence, $G$ generates~$L$.

The same grammar system, viewed as a $(\mathrm{CD}_{2},\mathrm{CF},(t\wedge
=k))$ grammar system, generates~$L$, too.
\end{example}

\section{The Finite Index Restriction}

\label{sec-finite-index}

The finite index restriction is defined as follows: let~$G$ be an arbitrary
grammar type (from those discussed in Section~\ref{defs}) and let~$N$, $T$,
and~$S\in N$ be its nonterminal alphabet, terminal alphabet, and axiom,
respectively. For a derivation 
\begin{equation*}
D:S=w_{1}\Longrightarrow w_{2}\Longrightarrow \cdots \Longrightarrow
w_{n}=w\in T^{\ast }
\end{equation*}%
according to~$G$, we set $\mathit{ind\/}(D,G)=\max \{\,|w_{i}|_{N}\mid 1\leq
i\leq n\,\}$. In the case of programmed grammars we assume to have a
derivation of the form 
\begin{equation*}
D:(S,r_{1})=(w_{1},r_{1})\Longrightarrow (w_{2},r_{2})\Longrightarrow \cdots
\Longrightarrow (w_{n},r_{n})=(w,r_{n})\in T^{\ast }\times \Lambda .
\end{equation*}%
For $w\in T^{\ast }$, we define $\mathit{ind\/}(w,G)=\min \{\,\mathit{ind\/}%
(D,G)\mid D$ is a derivation for $w$ in $G\,\}.$ The \emph{index of grammar~$%
G$\/} is defined as $\mathit{ind\/}(G)=\sup \{\,\mathit{ind\/}(w,G)\mid w\in
L(G)\,\}.$ For a language~$L$ in the family $\mathcal{L}(\mathrm{X})$ of
languages generated by grammars of type~$\mathrm{X}$, we define $\mathit{%
ind\/}_{\mathrm{X}}(L)=\inf \{\,ind(G)\mid L(G)=L$ and $G$ is of type $%
\mathrm{X}\,\}.$ For a family $\mathcal{L}(\mathrm{X})$, we set 
\begin{eqnarray*}
\mathcal{L}_{n}(\mathrm{X}) &=&\{\,L\mid L\in \mathcal{L}(\mathrm{X})\text{
and }ind\text{\/}_{\mathrm{X}}(L)\leq n\,\}\quad \text{for }n\in \mathbb{N}%
\text{, and} \\
\mathcal{L}_{\mathit{fin\/}}(\mathrm{X}) &=&\bigcup_{n\geq 1}\mathcal{L}_{n}(%
\mathrm{X})\text{.}
\end{eqnarray*}

It is well-known that the class of programmed languages of index~$m$ can be
characterized in various ways, compare, e.g., \cite%
{DasPau89,FerHol97a,RozVer78c}. Especially, normal forms are available. For
the reader's convenience, we quote \cite[Theorem~9]{FerHolFre03} in the
following, since we will use it to give a sharpened and broadened version of 
\cite[Theorem~3.26]{Csuetal94all}, which leads us to new characterizations
of the classes $\mathcal{L}_{m}(\mathrm{P},\mathrm{CF})$ and $\mathcal{L}_{%
\mathit{fin\/}}(\mathrm{P},\mathrm{CF})$.

\begin{theorem}
\label{thm:nsf} For every $(\mathrm{P},\mathrm{CF},\mathrm{ac})$ grammar $%
G=(N,T,P,S,\Lambda ,\sigma ,\phi )$ whose generated language is of index $%
n\in {\mathbb{N}}$, there exists an equivalent $(\mathrm{P},\mathrm{CF},%
\mathrm{ac})$ grammar $G^{\prime }=(N^{\prime },T,P^{\prime },S^{\prime
},\Lambda ,\sigma ^{\prime },\phi ^{\prime })$ whose generated language is
also of index~$n$ and which satisfies the following three properties:

\begin{enumerate}
\item There exists a special start production with a unique label $p_0$,
which is the only production where the start symbol $S^{\prime }$ appears.

\item There exists a function $f:\Lambda ^{\prime }\rightarrow {\mathbb{N}}%
_{0}^{N^{\prime }}$ such that, if $S^{\prime }\Longrightarrow ^{\ast
}v\Longrightarrow _{p}w$ is a derivation in $G^{\prime }$, then $%
(f(p))(A)=|v|_{A}$ for every nonterminal $A$.

\item If $D:S^{\prime }=v_{0}\Longrightarrow _{r_{1}}v_{1}\Longrightarrow
_{r_{2}}v_{2}\cdots \Longrightarrow _{r_{m}}v_{m}=w$ is a derivation in~$%
G^{\prime }$ then, for every $v_{i}$, $0\leq i\leq m$, and every nonterminal 
$A$, $|v_{i}|_{A}\leq 1$. In other words, every nonterminal occurs at most
once in any derivable sentential form.
\end{enumerate}

Moreover, we may assume that either $G^{\prime }$ is a $(\mathrm{P},\mathrm{%
CF})$ grammar, i.e., we have $\phi ^{\prime }=\emptyset $, or that $%
G^{\prime }$ is a $(\mathrm{P},\mathrm{CF},\mathrm{ut})$ grammar, i.e., we
have $\phi ^{\prime }=\sigma ^{\prime }$.
\end{theorem}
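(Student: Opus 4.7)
The plan is to obtain the three properties one at a time by successively refining the grammar, keeping the generated language and its index unchanged at every step. I would then adjust the bookkeeping at the end to arrange either $\phi' = \emptyset$ or $\phi' = \sigma'$.

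First, property~(1) is straightforward: introduce a fresh nonterminal $S'$ and a unique start label $p_0$ whose rule is $S' \to S$, and whose success field is $\{r \in \Lambda : r$ was reachable as an initial label$\}$; set $\phi(p_0)$ consistently (either $\emptyset$ or $\sigma(p_0)$). Since $S'$ occurs nowhere else, this does not affect the index.

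Next, to achieve property~(2), I would exploit finite index to precompute the multiset of nonterminals in the current sentential form. Let $M_n = \{\mu \in \mathbb{N}_0^N : \sum_A \mu(A) \leq n\}$; this is a finite set. Replace each label $p$ with the finitely many pairs $(p,\mu) \in \Lambda \times M_n$, and define $f(p,\mu) = \mu$. The success field of $(p,\mu)$ for a rule $A \to z$ with $\mu(A) \geq 1$ consists of all $(q,\mu')$ such that $q \in \sigma(p)$ and $\mu'$ is obtained from $\mu$ by decrementing $A$ and adding the multiset of nonterminals of $z$; analogously for the failure field when $\mu(A) = 0$, reflecting the ac case. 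A straightforward induction on the length of derivations shows the generated language is preserved; the index does not increase because the tracked multisets are exactly those that can actually occur.

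For property~(3), I would now split each nonterminal $A$ into tagged copies $A^{(1)}, \dots, A^{(n)}$, where the superscript records which "slot" the occurrence is using. Because property~(2) gives us full knowledge of $\mu$ in the label, we can encode into the label the slot assignment for every nonterminal currently present. Each rule $A \to z$ becomes a family of rules $A^{(i)} \to z'$, where $z'$ uses fresh unused slots for the nonterminals introduced by $z$ (there are always enough slots available since the total count never exceeds $n$); the label bookkeeping then tracks the slot reassignment. Again, the language and the index are preserved by construction, and now every derivable sentential form has each tagged nonterminal appear at most once.

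Finally, since the multiset $f(p)$ is known exactly, whether a rule applies or not is fully determined by the label; in particular, the success branch is always triggered by productive rule applications, and the failure branch by non-applicable ones. For the $\phi' = \emptyset$ variant, I would simply drop all labels corresponding to failure branches (together with transitions into them) and keep only those labels where the rule actually applies; rerouting the ac-checks through "dummy" productive rules if needed when the original grammar used ac essentially. For the $\phi' = \sigma'$ variant, set $\phi'(p,\mu) := \sigma'(p,\mu)$ and verify by case analysis on whether the rule fires that the derivation relation is unchanged. The main obstacle I expect is the ac-elimination step: in the presence of $\lambda$-rules it is delicate to show that failure checks can always be replaced deterministically without either losing words or introducing new ones, and one must rely on the finite-index bound and the multiset tracking to push this through cleanly.
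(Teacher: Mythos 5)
The paper does not actually prove this statement: it is quoted verbatim from \cite[Theorem~9]{FerHolFre03}, so there is no in-paper proof to compare against. Your proposal reconstructs what is essentially the standard argument behind that result (and behind the classical finite-index normal forms of Rozenberg--Vermeir and Dassow--P\u{a}un): use the finite-index bound to fold the Parikh vector of nonterminals into the label set, then use that information to assign each occurrence a distinct ``slot'' so that every (tagged) nonterminal occurs at most once, and finally exploit the fact that applicability of each rule is now statically determined by the label to normalize the failure fields. The steps are in the right order and each is sound; the one place where your write-up slightly undersells itself is the final ac-elimination, which you flag as the main obstacle: it is in fact the easiest step here, because once $(f(p))(A)=|v|_A$ holds, each label uses exactly one of its two fields in every derivation (the success field iff $f(p)(A)\geq 1$), so the unused field can be freely emptied or overwritten, and the $\lambda$-rule worry is immaterial since the theorem places no $\lambda$-freeness requirement. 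Two small points worth making explicit if you write this out in full: (i) transitions whose updated multiset would exceed total $n$ must be discarded, and this loses no words precisely because $\mathit{ind}(w,G)$ is a minimum over derivations of $w$; and (ii) when rerouting a failure chain you may need a dummy rewriting step $B\rightarrow B$ on some present nonterminal, which exists unless the sentential form is already terminal, in which case the derivation has ended anyway.
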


In the following, we will refer to a grammar satisfying the three conditions
listed above as \textit{nonterminal separation form (NSF)}.

Theorem 1 shows that, in contrast to the general case, where $\mathcal{L}(%
\mathrm{P},\mathrm{CF},\mathrm{ac})\supset \mathcal{L}(\mathrm{P},\mathrm{CF}%
)$, the appearance checking feature does not increase the generative power
of programmed grammars if the finite index restriction is imposed;
especially we have $\mathcal{L}_{m}(\mathrm{P},\mathrm{CF}[-\lambda ],%
\mathrm{ac})=\mathcal{L}_{m}(\mathrm{P},\mathrm{CF}[-\lambda ])$.

Recall that we have shown in \cite[Theorem~30]{FerHolFre03} the following
link between hybrid CDGS and the finite index restriction on programmed
grammars.

\begin{theorem}
\label{the-fin.ind.charac} Let $\ell \in {\mathbb{N}}$ and $\Delta \in
\{\leq ,=\}$. Then we have: 
\begin{eqnarray*}
\mathcal{L}(\mathrm{HCD}_{\infty },\mathrm{CF}[-\lambda ],\{\,(t\wedge
\Delta k)\mid k\geq 1\,\}) &=&\bigcup_{k\in {\mathbb{N}}}\mathcal{L}(\mathrm{%
CD}_{\infty },\mathrm{CF}[-\lambda ],(t\wedge \Delta k)) \\
&=&\mathcal{L}(\mathrm{CD}_{\infty },\mathrm{CF}[-\lambda ],(t\wedge \Delta
l)) \\
&=&\mathcal{L}(\mathrm{CD}_{\infty },\mathrm{CF}[-\lambda ],(t\wedge \Delta
1)) \\
&=&\mathcal{L}_{\mathit{fin\/}}(\mathrm{P},\mathrm{CF}[-\lambda ],\mathrm{ac}%
).
\end{eqnarray*}
\end{theorem}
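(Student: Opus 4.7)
The overall strategy is to establish the chain of equalities by a cycle of inclusions. The trivial inclusions are $(c) \subseteq (b)$, since $(t \wedge \Delta \ell)$ is among the modes $\{(t \wedge \Delta k) \mid k \geq 1\}$ permitted in the HCD system on the left; $(b) \subseteq (a)$, since each uniform-mode CDGS is an HCD system all of whose components share the same mode; and $(d) \subseteq (b)$, since the union on the right of $(b)$ contains the $k = 1$ term. The whole theorem will follow once the two substantive simulations $(a) \subseteq (e)$ and $(e) \subseteq (d)$ are proved, and the construction used for $(e) \subseteq (d)$ is shown to yield a CDGS that can also be read in mode $(t \wedge \Delta \ell)$, providing the inclusion $(e) \subseteq (c)$ that closes the cycle.

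For $(a) \subseteq (e)$, I would simulate an HCD system $G$ with components $(P_i, (t \wedge \Delta k_i))$ by a programmed grammar $G'$ with appearance checking. Labels in $G'$ have the form $(r, i, s)$ meaning ``rule $r$ of component $P_i$ is being fired at the $s$-th step of the current activation''; the success field advances $s$ within a fixed component and, after the $\Delta k_i$ bound is satisfied, branches to $(r', j, 1)$ for some rule $r'$ of any component $P_j$. Before the branch, a chain of appearance-checking labels (one per rule of $P_i$) verifies that no LHS of $P_i$ is present, enforcing $t$-termination. Theorem~\ref{thm:nsf} then allows appearance checking to be eliminated without losing the finite-index bound. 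The delicate point is verifying finite index for $G'$: the step bound $\Delta k_i$ per activation, together with the $t$-terminal requirement which forces the LHS nonterminals of the component just activated to have vanished at the end, must be combined into a uniform bound on $\mathit{ind}(w, G')$ for all $w \in L(G)$.

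For $(e) \subseteq (d)$, I would first apply Theorem~\ref{thm:nsf} to put the given programmed grammar into NSF with $\phi' = \emptyset$, yielding a function $f : \Lambda' \to \mathbb{N}_0^{N'}$ tracking the exact nonterminal multiset at each label, and the guarantee that every nonterminal occurs at most once in every derivable sentential form. In the simulating CDGS $G^\ast$, each nonterminal carries a label tag, so that at state $p$ the sentential form consists of tagged nonterminals $[A, p]$ for $A \in f(p)$. A transition $p \to q$ applying rule $A \to w$ is simulated by a short sequence of component activations in mode $(t \wedge \Delta 1)$: one ``apply'' component rewriting $[A, p]$ into a string with the nonterminals of $w$ carrying a transient tag $[\cdot, p \!\to\! q]$; one ``retag'' component per $B \in f(p) \setminus \{A\}$ rewriting $[B, p]$ to $[B, p \!\to\! q]$; and ``commit'' components replacing $[\cdot, p \!\to\! q]$ by $[\cdot, q]$. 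Because each of these components has a single LHS nonterminal that appears at most once by NSF, every activation fires at most one rule and is then $t$-terminal, respecting mode $(t \wedge \Delta 1)$. The same $G^\ast$, after optionally padding each activation with $\ell - 1$ dummy rewriting steps in the case $\Delta$ is $=$, also works in mode $(t \wedge \Delta \ell)$, giving $(e) \subseteq (c)$ and closing the cycle.

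The main obstacle is the synchronization required in the second simulation. Since CDGS activations are chosen nondeterministically and without any externally imposed order, the tagging scheme, the transient $[\cdot, p \!\to\! q]$ markers, and the auxiliary retag and commit components must be designed so that at every intermediate sentential form only the intended component is applicable; any stray interleaving of apply/retag/commit phases across distinct transitions must be structurally ruled out, so that the image $L(G^\ast)$ coincides exactly with $L(G)$ and does not pick up spurious words. A secondary difficulty is the finite-index accounting in the first simulation, which must exploit both the per-activation step bound and the $t$-termination condition to yield a global index bound rather than a merely local one.
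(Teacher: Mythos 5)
First, a structural remark: this paper does not itself prove Theorem~\ref{the-fin.ind.charac}; it is quoted from the authors' earlier work \cite{FerHolFre03} (Theorem~30), so there is no in-paper proof to compare against. The closest in-paper analogues are Theorem~\ref{inclusion} (the simulation of $(t\wedge\Delta k)$-systems by programmed grammars with appearance checking, including the index bound $n\cdot k$) and the simulation of NSF programmed grammars by CD components in Section~\ref{sec-CDGS-finite-index}. Your overall architecture --- a cycle of inclusions whose only substantive legs are $(a)\subseteq(e)$ and $(e)\subseteq(d)$/$(e)\subseteq(c)$, with Theorem~\ref{thm:nsf} supplying the nonterminal separation form for the converse direction --- is the right one and is consistent with what the paper does for its related results; your $(a)\subseteq(e)$ sketch is essentially Theorem~\ref{inclusion} adapted to per-component bounds $k_i$, and its index accounting goes through by the pigeonhole/$t$-termination argument you allude to.

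However, your $(e)\subseteq(d)$ construction has a genuine gap, located exactly where you flag ``the main obstacle'' without resolving it. If $q,q'\in\sigma(p)$ are two distinct successors of the same label $p$, the apply/retag components you build for the transitions $p\to q$ and $p\to q'$ have identical left-hand sides over the $[\cdot,p]$-tagged nonterminals. After the ``apply'' component for $p\to q$ has fired, nothing prevents the system from activating a ``retag'' component belonging to $p\to q'$ on some surviving $[B,p]$: each such activation is a perfectly legal $(t\wedge\Delta 1)$-step (one rewrite, then $t$-terminal, since by NSF every tagged symbol occurs at most once). The resulting sentential form carries a mixture of $q$- and $q'$-tags, from which the system can run two independent ``threads'' of the programmed derivation on disjoint parts of the string --- something no single programmed derivation does --- so the inclusion $L(G^{\ast})\subseteq L(G)$ is not established. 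Repairing this needs an explicit blocking mechanism, e.g.\ trap rules $[B,p\!\to\!q']\rightarrow F$ placed inside the $p\to q$ components so that either the $t$-condition of an activation fails or an unremovable symbol $F$ is introduced whenever foreign tags are present; your sketch names the requirement but supplies no such mechanism. Note that the construction the paper does give in Section~\ref{sec-CDGS-finite-index} sidesteps the issue entirely by packing the whole retagging of a transition into a single component running in mode $(t\wedge = m)$ with $m$ the index, so that control cannot leave the component until every tag is consistent; that trick is unavailable at step bound $1$, which is precisely why the $k=1$ case rests on the cited external proof rather than on anything reproduced here.
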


Unfortunately, our proof did not bound the number of components of the CD
grammar system. This is not just a coincidence, as we will see in this paper.

\section{Infinite Hierarchies for CD Grammar Systems Working in Hybrid Modes 
\label{sec-counting-components}}

Our task will be the study of the language families $\mathcal{L}(\mathrm{CD}%
_{n},\mathrm{CF}[-\lambda ],(t\wedge \Delta k))$ for different $n,k\in {%
\mathbb{N}}$ and $\Delta \in \{\leq ,=\}$. First we give some
characterizations of well-known language families, namely the family of
finite languages and the family of linear languages.

\begin{lemma}
For every $k\in {\mathbb{N}}$, and $\Delta \in \{\leq ,=\}$, we have 
\begin{equation*}
\mathcal{L}(\mathrm{FIN})=\mathcal{L}(\mathrm{CD}_{1},\mathrm{CF}[-\lambda
],(t\wedge \Delta k)).
\end{equation*}
\end{lemma}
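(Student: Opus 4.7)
The plan is to establish both inclusions separately. The proof reduces essentially to the observation that with a single component, a derivation collapses to a single effective application of that component in the chosen mode.

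For the inclusion $\mathcal{L}(\mathrm{CD}_{1},\mathrm{CF}[-\lambda],(t\wedge\Delta k))\subseteq\mathcal{L}(\mathrm{FIN})$, I would first argue that a derivation $S\Longrightarrow_{1}^{(t\wedge\Delta k)}w_{1}\Longrightarrow_{1}^{(t\wedge\Delta k)}\cdots\Longrightarrow_{1}^{(t\wedge\Delta k)}w_{m}=w\in T^{*}$ cannot produce anything genuinely new after the first step. Indeed, since $w_{1}$ is obtained in the $t$-mode, no rule of $P_{1}$ is applicable to $w_{1}$; hence any subsequent macro step must use $m=0$ inner derivation steps. For $\Delta$ being $=$ this is forbidden because $k\geq1$, and for $\Delta$ being $\leq$ it is only vacuous stuttering. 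Thus $L(G)$ coincides with the set of terminal strings reachable from $S$ by at most $k$ context-free rewrite steps using $P_{1}$ (every terminal string is automatically a dead end for $P_{1}$). Because each rule has a fixed bounded right-hand side, the set of all sentential forms reachable in $\leq k$ steps from $S$ is finite; restricting to $T^{*}$ therefore yields a finite language.

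For the converse inclusion, given a finite $L=\{w_{1},\ldots,w_{n}\}\subseteq T^{+}$, I would construct a single-component system
\[
G=(N,T,S_{0},P_{1}),\qquad N=\{S_{0},S_{1},\ldots,S_{k-1}\},
\]
with
\[
P_{1}=\{\,S_{j}\rightarrow S_{j+1}\mid 0\leq j\leq k-2\,\}\cup\{\,S_{k-1}\rightarrow w_{i}\mid 1\leq i\leq n\,\}
\]
(collapsing to $P_{1}=\{S_{0}\rightarrow w_{i}\}$ when $k=1$). Every derivation from $S_{0}$ has the form $S_{0}\Rightarrow S_{1}\Rightarrow\cdots\Rightarrow S_{k-1}\Rightarrow w_{i}$, taking exactly $k$ steps and ending in a terminal (hence $P_{1}$-dead) word $w_{i}$. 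This simultaneously witnesses membership of $L$ in $\mathcal{L}(\mathrm{CD}_{1},\mathrm{CF}[-\lambda],(t\wedge=k))$ and in $\mathcal{L}(\mathrm{CD}_{1},\mathrm{CF}[-\lambda],(t\wedge\leq k))$. In the $\lambda$-free variant, the empty word is simply absorbed by the paper's convention of identifying languages modulo $\{\lambda\}$.

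The only subtlety — and what I expect to be the main conceptual obstacle — is the careful treatment of the multi-step macro semantics in the first direction: one must rule out that a second or later macro step contributes a new terminal word. Once the $t$-mode dead-end argument is combined with $k\geq1$ to eliminate this possibility, the remaining observations (boundedness of CF derivations of length $\leq k$, and the chain construction for the converse) are entirely routine.
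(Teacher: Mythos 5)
Your proof is correct and follows essentially the same route as the paper's: the upper bound comes from observing that a single component in a $(t\wedge\Delta k)$-mode can only perform a bounded number of rewriting steps before reaching a dead end (you merely spell out the dead-end/stuttering analysis of later macro steps, which the paper compresses into one sentence), and the lower bound uses the identical chain construction $S_{0}\rightarrow S_{1}\rightarrow\cdots\rightarrow S_{k-1}\rightarrow w_{i}$ that the paper writes with nonterminals $(S,1),\dots,(S,k)$. No gaps.
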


\begin{proof}
Since we have only one component, by definition of the $(t\wedge \Delta k)$%
-mode, every derivation has length at most~$k$, so that we only get finite
languages. If $L=\{w_{1},w_{2},\dots ,w_{m}\}\subseteq T^{\ast }$ is some
finite language, then the grammar $G=(\{S\}\times \{1,\dots ,k\},T,(S,1),P)$
with 
\begin{equation*}
P=\{\,(S,i)\rightarrow (S,i+1)\mid 1\leq i<k\,\}\cup \{\,(S,k)\rightarrow
w_{j}\mid 1\leq j\leq m\,\})
\end{equation*}%
generates~$L$.
\end{proof}

Now we turn our attention to CD grammar systems with two components working
in the $(t\wedge \Delta 1)$-mode for $\Delta \in \{\leq ,=\}$.

\begin{lemma}
For $\Delta \in \{\leq ,=\}$ we have $\mathcal{L}(\mathrm{LIN})=\mathcal{L}(%
\mathrm{CD}_{2},\mathrm{CF}[-\lambda ],(t\wedge \Delta 1)).$
\end{lemma}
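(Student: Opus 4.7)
I would prove both inclusions separately, treating the two modes $\Delta\in\{\le,=\}$ uniformly by observing that in mode $(t\wedge\le 1)$ the only extra option over $(t\wedge =1)$ is a zero-step ``skip'', which by definition leaves the sentential form unchanged and therefore cannot alter the generated language.

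\textbf{Inclusion $\mathcal{L}(\mathrm{LIN})\subseteq\mathcal{L}(\mathrm{CD}_2,\mathrm{CF}[-\lambda],(t\wedge\Delta 1))$.} Given a linear grammar $G=(N,T,S,P)$, I would use a two-colour simulation. Take nonterminals $N'=N\times\{1,2\}$, axiom $(S,1)$, and, for every original rule $A\to uBv$ (resp.\ $A\to u$) with $u,v\in T^{\ast}$ and $B\in N$, place into $P_i$ the production $(A,i)\to u(B,3-i)v$ (resp.\ $(A,i)\to u$). Since sentential forms of a linear grammar carry at most one nonterminal, a single application of $P_i$ replaces the unique colour-$i$ nonterminal by a string containing at most one nonterminal of the opposite colour; no rule of $P_i$ can fire on the result, so the $(t\wedge\Delta 1)$-condition is met and the original derivation is reproduced step by step.

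\textbf{Inclusion $\mathcal{L}(\mathrm{CD}_2,\mathrm{CF}[-\lambda],(t\wedge\Delta 1))\subseteq\mathcal{L}(\mathrm{LIN})$.} Let $G=(N,T,S,P_1,P_2)$ and write $L_i$ for the set of left-hand sides of rules in $P_i$. The central structural claim is: \emph{every sentential form occurring in a terminating derivation contains at most one nonterminal.} Suppose a genuine one-step application of $P_i$ has just produced $w$. The $t$-part of the mode forces $w$ to contain no symbol of $L_i$; any nonterminal outside $L_1\cup L_2$ would be stuck and preclude termination, so every nonterminal of $w$ lies in $L_{3-i}$. If $w$ carried two or more such symbols, the next genuine step, which must come from $P_{3-i}$, could rewrite only one of them in a single rule application and would thus leave an $L_{3-i}$-symbol behind, violating its own $t$-condition. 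Together with the base case $w_0=S$, this proves the claim by induction.

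The claim implies that every rule actually fired in a terminating derivation has at most one nonterminal on its right-hand side. Define the linear grammar $G'=(N,T,S,P')$ whose rules are precisely the linear members of $P_1\cup P_2$ whose right-hand side contains no nonterminal in $N\setminus(L_1\cup L_2)$. Then $L(G)\subseteq L(G')$ is immediate; for $L(G')\subseteq L(G)$ I would simulate a $G'$-derivation: a useful non-terminating rule $A\to uBv$ of $P_i$ forces $B\notin L_i$, so the subsequent rewrite of $B$ is compelled to use $P_{3-i}$. Hence component alternation is dictated by the rules themselves and each simulated step automatically satisfies the $(t\wedge\Delta 1)$-condition. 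The main obstacle is the ``at most one nonterminal'' lemma; once it is in hand the remaining bookkeeping (in particular handling rules $A\to u$ with $u\in T^{\ast}$ that may belong to both components, and the initial step when $S\in L_1\cap L_2$) is routine, since such ambiguities only arise at the very first or the very last step and therefore do not interfere with alternation.
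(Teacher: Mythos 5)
Your forward simulation and your central structural claim (at most one nonterminal in any sentential form of a terminating derivation) are both correct and essentially match the paper's own argument. The gap lies in the very last step of the converse inclusion: the rule set $P'$ you define is too large. You keep every linear rule of $P_1\cup P_2$ whose right-hand-side nonterminal lies in $L_1\cup L_2$, but you never discard a rule $A\to uBv\in P_i$ with $B\in L_i$. Such a rule can never be fired in a valid $(t\wedge\Delta 1)$-derivation of $G$: after the single permitted step the symbol $B\in L_i$ is still rewritable by $P_i$, so the $t$-condition fails, and a second step is forbidden by the $\Delta 1$ part. Yet in $G'$ this rule is freely applicable, so $L(G')$ can strictly contain $L(G)$. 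Concretely, take $P_1=\{S\to aA,\ A\to b\}$ and $P_2=\{C\to c\}$ with $C$ unreachable: then $L_1=\{S,A\}$, every rule survives your filter, and $L(G')=\{ab\}$, whereas $L(G)=\emptyset$ because $P_1$ cannot satisfy the $t$-condition after one step on $S$. Your sentence ``a useful non-terminating rule $A\to uBv$ of $P_i$ forces $B\notin L_i$'' is a property of rules that actually occur in terminating $G$-derivations; it is not guaranteed by your definition of $P'$, and the inclusion $L(G')\subseteq L(G)$ is precisely where you need it.

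The repair is the paper's second pruning step, which you omitted: additionally remove from each component $P_i$ every production whose right-hand side contains a nonterminal that also occurs as a left-hand side of that same component, and only then take the union of the components. With that extra filter every surviving rule $A\to uBv$ stemming from $P_i$ satisfies $B\notin L_i$, so each step of a $G'$-derivation is a legal one-step $(t\wedge\Delta 1)$-application of the component it came from (the sentential form carries a single nonterminal, which after the step is no longer rewritable by that component), and both inclusions go through. Apart from this omission your route is the same as the paper's; your two-colour forward construction is a mild, equally valid variant of the paper's colouring-component construction.
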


\begin{proof}
Let~$L$ be generated by the linear grammar $G=(N,T,S,P)$. Grammar~$G$ is
simulated by the CD grammar system $G^{\prime }=(N\cup N^{\prime
},T,S,P_{1},P_{2})$ where~$N^{\prime }$ contains primed versions of the
nonterminals of~$G$, set~$P_{1}$ contains colouring unit productions $%
B\rightarrow B^{\prime }$ for every nonterminal $B\in N$, and~$P_{2}$
contains, for every production $A\rightarrow w\in P$, a production $%
A^{\prime }\rightarrow w$. The simulation of~$G$ by~$G^{\prime }$ proceeds
by applying~$P_{2}$ and~$P_{1}$ in sequence until the derivation stops.

On the other hand, it is easy to see that no sentential form generated by
some $(\mathrm{CD}_{2},\mathrm{CF}[-\lambda ],(t\wedge \Delta 1))$-system
(eventually leading to a terminal string) can contain more than one
nonterminal. Otherwise, we must have applied a production $A\rightarrow w$
of say the first component, where $w$ contains at least two nonterminals.
All nonterminals occurring in~$w$ cannot be processed further by the first
component, since otherwise it violates the $(t\wedge \Delta 1)$-mode
restriction. But nearly the same argument applies to the second component,
too: it can only process at most one of the nonterminals just introduced.
Hence, no terminal string is derivable in this way.

Therefore, one can omit all productions containing more than one nonterminal
on their right-hand sides, so that there are only linear rules left.
Furthermore, one can also omit all productions in a component containing a
nonterminal as its right-hand side which occurs also as the left-hand side
of the originally given component as this would lead to more than one
derivation step in the same component. Now, one can put all remaining
productions together yielding the rule set of a simulating linear grammar.
\end{proof}

In the general case, i.e., two components working together in the $(t\wedge
\Delta k)$-mode, for $k\in {\mathbb{N}}$ and $\Delta \in \{\leq ,=\}$, we
first give some lower bounds.

\begin{theorem}
\label{the-fin.hier1} Let $k\in {\mathbb{N}}$, $\Delta \in \{\leq ,=\}$.
Then we have:

\begin{enumerate}
\item $\mathcal{L}(\mathrm{LIN})=\mathcal{L}_{1}(\mathrm{CF})=\mathcal{L}(%
\mathrm{CD}_{2},\mathrm{CF}[-\lambda ],(t\wedge \Delta 1))$; 

\item $\mathcal{L}_{k}(\mathrm{CF})\subseteq \mathcal{L}(\mathrm{CD}_{2},%
\mathrm{CF}[-\lambda ],(t\wedge \Delta k))$ for $k\geq 1$, and

\item $\mathcal{L}_{k}(\mathrm{CF})\subset \mathcal{L}(\mathrm{CD}_{2},%
\mathrm{CF}[-\lambda ],(t\wedge =k))$ for $k>1$;

\item $\mathcal{L}_{\mathit{fin\/}}(\mathrm{CF})\subset \bigcup_{k\in {%
\mathbb{N}}}\mathcal{L}(\mathrm{CD}_{2},\mathrm{CF}[-\lambda ],(t\wedge =k))$%
.
\end{enumerate}
\end{theorem}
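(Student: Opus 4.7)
The plan is to address the four items in sequence. Item~1 follows by combining the classical identity $\mathcal{L}_{1}(\mathrm{CF}) = \mathcal{L}(\mathrm{LIN})$ with the equality $\mathcal{L}(\mathrm{LIN}) = \mathcal{L}(\mathrm{CD}_{2}, \mathrm{CF}[-\lambda], (t \wedge \Delta 1))$ proven in the preceding lemma. For item~2 with $\Delta = {\leq}$, I use the standard priming construction: given a CF grammar $G = (N, T, P, S)$ of index~$k$, set $G' = (N \cup \overline{N}, T, S, P_{1}, P_{2})$ with $P_{1} = \{A \to \overline{A} \mid A \in N\}$ and $P_{2} = \{\overline{A} \to w \mid (A \to w) \in P\}$. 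Because every sentential form of $G$ carries at most $k$ nonterminals, each activation of $P_{1}$ bars at most $k$ symbols then terminates, and each activation of $P_{2}$ consumes at most $k$ symbols then terminates, so both components operate correctly in $(t \wedge \leq k)$-mode and $L(G') = L(G)$ follows by an easy induction on round count.

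The main obstacle is item~2 with $\Delta = {=}$, where each invocation must fire \emph{exactly} $k$ rules before becoming blocked. My plan is to refine the priming construction by adding a dummy pair $D, \overline{D}$ together with a start chain $S_{0} \to S_{1} \to \cdots \to S_{k-1} \to \overline{S}\,\overline{D}^{k-1}$ installed in $P_{1}$, so that the very first $P_{1}$-round fires exactly $k$ rules and loads the initial configuration. The invariant to maintain across handoffs is that the sentential form always carries exactly $k$ nonterminals (real plus dummies); $P_{2}$ is accordingly equipped with rule variants $\overline{A} \to w\,D^{j}$ for all admissible padding counts $j$, together with $\overline{D} \to D$. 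The sticking point is that when an original rule $A \to w$ raises the nonterminal count, pure additive padding is insufficient, because $\lambda$-free single-symbol left-hand sides cannot delete dummies. My plan here is to first preprocess $G$ via Theorem~\ref{thm:nsf} (using $\mathcal{L}_{k}(\mathrm{CF}) \subseteq \mathcal{L}_{k}(\mathrm{P}, \mathrm{CF})$) into the NSF normal form, where each nonterminal occurs at most once per sentential form, and then to attach slot indices $\{1, \ldots, k\}$ to each nonterminal so that any transient expansion is absorbed by slots previously held by dummies rather than requiring deletions; this preserves the exact-$k$ invariant and makes both $P_{1}$ and $P_{2}$ run in $(t \wedge = k)$-mode.

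Items~3 and~4 are witness arguments anchored by Example~\ref{exa-hybrid1}. For item~3, the language $L_{k} = \{a_{1}^{n} a_{2}^{n} \cdots a_{k+1}^{n} \mid n \geq 1\}$ is, by that example, in $\mathcal{L}(\mathrm{CD}_{2}, \mathrm{CF}[-\lambda], (t \wedge = k))$; for $k \geq 2$, the erasing homomorphism keeping only $a_{1}, a_{2}, a_{3}$ reduces $L_{k}$ to $\{a^{n}b^{n}c^{n} \mid n \geq 1\}$, which is not context-free by the pumping lemma, hence $L_{k} \notin \mathcal{L}(\mathrm{CF}) \supseteq \mathcal{L}_{k}(\mathrm{CF})$. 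For item~4, taking the union over $k$ in item~2 yields $\mathcal{L}_{\mathit{fin}}(\mathrm{CF}) \subseteq \bigcup_{k \in \mathbb{N}} \mathcal{L}(\mathrm{CD}_{2}, \mathrm{CF}[-\lambda], (t \wedge = k))$, and the same $L_{2} = \{a^{n}b^{n}c^{n} \mid n \geq 1\}$ witnesses strictness---it sits in the right-hand side but is not even context-free, hence certainly not of finite index. In summary, the delicate construction in item~2 for $\Delta = {=}$ is the technical core; all other inclusions collapse either to the preceding lemma, to the simpler $\Delta = {\leq}$ priming, or to the explicit witness of Example~\ref{exa-hybrid1}.
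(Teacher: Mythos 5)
Items 1, 3 and 4 of your proposal match the paper's proof: item 1 reduces to the preceding lemma via $\mathcal{L}_{1}(\mathrm{CF})=\mathcal{L}(\mathrm{LIN})$, and items 3 and 4 use the witness $\{\,a_{1}^{n}\cdots a_{k+1}^{n}\mid n\geq 1\,\}$ of Example~\ref{exa-hybrid1} exactly as the paper does. The genuine gap is in item 2 for $\Delta={=}$, which you rightly identify as the technical core but then attack with a construction that does not work. A dummy symbol $D$, once deposited by a rule $\overline{A}\rightarrow wD^{j}$, can never leave the sentential form: in a $\lambda$-free context-free setting it can only be recoloured back and forth, never erased and never turned into part of the terminal word, so any derivation that ever pads with a dummy cannot terminate. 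The ``slot index'' repair does not address this, because a context-free rule rewrites a single occurrence in place; there is no mechanism to transfer the position occupied by a dummy to the (different) position where a rule $A\rightarrow w$ needs to deposit extra nonterminals. Routing the argument through Theorem~\ref{thm:nsf} also does not help: that theorem normalizes \emph{programmed} grammars, and simulating the resulting label and success-field control is exactly what two components cannot obviously do.

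The paper's solution to the exact-step-count problem is far simpler and is the idea you are missing: pad the \emph{step count}, not the sentential form. The paper uses one construction for both $\Delta={\leq}$ and $\Delta={=}$, namely $P_{1}=\{\,B\rightarrow B',\ B\rightarrow B\mid B\in N\,\}$ and $P_{2}=\{\,A'\rightarrow w,\ A'\rightarrow A'\mid A\rightarrow w\in P\,\}$. The unit self-loops let a component idle on nonterminals that are already present, so an activation with $j\leq k$ symbols to recolour or rewrite can always be stretched to exactly $k$ steps without introducing new material, while the $t$-part still forces every unprimed (respectively primed) symbol to be processed before the hand-over. One further caution, which applies equally to your $\Delta={\leq}$ argument: ``every sentential form of $G$ carries at most $k$ nonterminals'' is not what index $k$ means (it only guarantees that each word has \emph{some} derivation of width at most $k$), and the priming construction forces a level-synchronous derivation; to make the completeness direction airtight you should start from a grammar in which every derivable sentential form has at most $k$ nonterminals, a nonterminal-bounded normal form on which the paper's own argument also implicitly relies.
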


\begin{proof}

\begin{enumerate}
\item It is easy to see that $\mathcal{L}(\mathrm{LIN})=\mathcal{L}_{1}(%
\mathrm{CF})$. Hence, this statement is equivalent to the assertion of the
previous lemma.

\item Let $G=(N,T,S,P)$ be a context-free grammar of index~$k$. Without loss
of generality, we assume that every nonterminal occurs as the left-hand side
of some production in~$P$. Let $N^{\prime }$ be the set of primed
nonterminal symbols. Grammar~$G$ is simulated by the CD grammar system $%
G^{\prime }=(N\cup N^{\prime },T,S,P_{1},P_{2})$, where~$P_{1}$ contains
colouring unit productions $B\rightarrow B^{\prime }$, and $B\rightarrow B$
for every nonterminal $B\in N$, and~$P_{2}$, for every production $%
A\rightarrow w\in P$, contains productions $A^{\prime }\rightarrow w$ and $%
A^{\prime }\rightarrow A^{\prime }$. The unit productions $B\rightarrow B$
in $P_{1}$ and $A^{\prime }\rightarrow A^{\prime }$ in $P_{2}$ guarantee
that at most $k$ nonterminals can occur in any sentential form that can be
derived in $G^{\prime }$.

\item A separating example was already explained in Example~\ref{exa-hybrid1}%
: there the languages $\{\,a_{1}^{n}a_{2}^{n}\ldots a_{k+1}^{n}\mid n\geq
1\,\}$ was shown to be in $\mathcal{L}(\mathrm{CD}_{2},\mathrm{CF}[-\lambda
],(t\wedge =k))$ for $k\geq 2$, but obviously these languages are not
context-free.

\item Follows from~3.
\end{enumerate}
\end{proof}

Unfortunately, we do not know whether the inclusion 
\begin{equation*}
\mathcal{L}_{k}(\mathrm{CF})\subseteq \mathcal{L}(\mathrm{CD}_{2},\mathrm{CF}%
[-\lambda ],(t\wedge \leq k))
\end{equation*}%
in the previous theorem is strict or not. By the prolongation technique
introduced in \cite{FerFreHol01}, we know that the classes $\mathcal{L}(%
\mathrm{CD}_{n},\mathrm{CF}[-\lambda ],(t\wedge \Delta k))$, for $\Delta \in
\{\leq ,=\}$ form a prime number lattice, i.e., 
\begin{equation*}
\mathcal{L}(\mathrm{CD}_{n},\mathrm{CF}[-\lambda ],(t\wedge \Delta
k))\subseteq \mathcal{L}(\mathrm{CD}_{n},\mathrm{CF}[-\lambda ],(t\wedge
\Delta \ell \cdot k)\quad \text{for }\ell \in \mathbb{N},
\end{equation*}%
with the least element $\mathcal{L}(\mathrm{CD}_{n},\mathrm{CF}[-\lambda
],(t\wedge \Delta 1))$. This prolongation technique is based on the simple
idea to \textquotedblleft slow down\textquotedblright\ a derivation using $%
A\rightarrow w$ of the original CDGS by intercalating productions of the
form $A\rightarrow A^{\prime }$, $A^{\prime }\rightarrow A^{\prime \prime }$%
, \dots , $A^{(j)}\rightarrow w$ within the simulating CDGS. It will be used
on several occasions in this paper. Obviously, we also have the trivial
inclusions 
\begin{equation*}
\mathcal{L}(\mathrm{CD}_{n},\mathrm{CF}[-\lambda ],(t\wedge \Delta
k))\subseteq \mathcal{L}(\mathrm{CD}_{n+1},\mathrm{CF}[-\lambda ],(t\wedge
\Delta k))\quad \text{for }\Delta \in \{\leq ,=\}.
\end{equation*}

The question arises whether all these hierarchies are strict. At least we
will be able to show that both with respect to $k$ -- for a fixed number of
components $n$ -- as well as with respect to the number of components $n$ --
for a fixed derivation mode $(t\wedge \Delta k)$, $\Delta \in \{\leq ,=\}$
-- we obtain infinite hiearachies. In order to prove these hierarchies, we
show some general theorems relating the number of components and the bound
of the number of symbols to be rewritten by one component with the finite
index of a simulating programmed grammar.

\begin{theorem}
\label{inclusion}Let $n,k\in {\mathbb{N}}$ and $\Delta \in \{\leq ,=\}$.
Then, we have 
\begin{equation*}
\mathcal{L}(\mathrm{CD}_{n},\mathrm{CF}[-\lambda ],(t\wedge \Delta
k))\subseteq \mathcal{L}_{n\cdot k}(\mathrm{P},\mathrm{CF}[-\lambda ],ac).
\end{equation*}
\end{theorem}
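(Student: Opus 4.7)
My plan is to construct a $(\mathrm{P},\mathrm{CF},\mathrm{ac})$ grammar $G'$ that simulates a given $G=(N,T,S,P_1,\ldots,P_n)$ working in the $(t\wedge\Delta k)$-mode, whose generated language equals $L(G)$ and whose index is bounded by $n\cdot k$. The nonterminals of $G'$ will consist of decorated copies $A^{(i,j)}$ of each $A\in N$, with $1\le i\le n$ and $0\le j\le k$, where the label $(i,j)$ records that $A$ is currently assigned to the $i$-th component and that $j$ steps of the ongoing macro-step have already been performed; plain undecorated copies and a new start symbol $S'$ are also present.

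A single macro-step of $G$ using component $P_i$ is simulated as follows. Each rule $A\to w\in P_i$ is turned into a family of programmed rules $r_{A\to w}^{i,j}\colon A^{(i,j)}\to w^{(i,j+1)}$ for $0\le j<k$, where $w^{(i,j+1)}$ decorates every nonterminal of $w$ with the bumped index $(i,j+1)$; the success fields are chosen so that control stays inside the current macro-step until either $k$ steps have been executed or, if $\Delta$ equals $\le$, a transition to the $t$-check phase is taken. The $t$-check exploits the appearance-checking feature: for every rule $A\to w$ of $P_i$, an identity-like test rule with left-hand side $A$ is attempted, its success field routing into a trap state from which no terminal word is reachable (so that the attempted macro-step is aborted), while its failure field proceeds to the next test. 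If all tests fail, the $t$-condition is confirmed, and the surviving symbols of the sentential form are re-decorated from component $i$ to a freshly chosen component $i'$, one nonterminal at a time via dedicated programmed rules, before entering the next macro-step.

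The index analysis is the heart of the proof. The claim to establish is that the above $G'$, appropriately organized, admits for every $w\in L(G)$ a derivation in which no sentential form contains more than $n\cdot k$ nonterminals. The intuition is that at any moment the labels $(i,j)$ that can co-occur are confined to at most $n\cdot k$ distinct pairs, because each macro-step commits at most $k$ nonterminals to the current component and only $n$ components are available for the commitment; passing to the nonterminal separation form via Theorem~\ref{thm:nsf} ensures that each such decorated symbol appears at most once in a sentential form. The prolongation idea mentioned after Theorem~\ref{the-fin.hier1} is useful here, since it allows one to space out the decoration-bumping and re-decoration steps without disturbing the $(t\wedge\Delta k)$-mode being simulated.

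The main obstacle I expect is precisely the index bound: the naive simulation just mirrors the sentential forms of $G$, whose nonterminal content need not be bounded by $n\cdot k$, and the appearance-checking routine together with the re-decoration phase introduce additional bookkeeping that must be absorbed into the $n\cdot k$ budget. Overcoming this requires exhibiting, for every accepted word, a canonical derivation scheme in which the simulation processes nonterminals in a disciplined order (so that at most $k$ are simultaneously committed per component), and verifying that the auxiliary productions introduced for enforcing the step count, for the $t$-check, and for component switching never inflate the count past $n\cdot k$.
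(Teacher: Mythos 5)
There is a genuine gap, and it sits exactly where you locate ``the heart of the proof.'' You write that ``the naive simulation just mirrors the sentential forms of $G$, whose nonterminal content need not be bounded by $n\cdot k$,'' and you then try to repair this with per-symbol decorations, a canonical processing order, and an appeal to Theorem~\ref{thm:nsf}. But the central fact that makes the theorem true---and that the paper's proof relies on---is the opposite of your premise: in a CD grammar system with $n$ components working in the $(t\wedge\Delta k)$-mode, \emph{every} sentential form occurring in a derivation that reaches a terminal word already has at most $n\cdot k$ nonterminal occurrences. Indeed, fix such a sentential form $v$ and consider the remaining activations. If an occurrence of a nonterminal $A$ in $v$ is first rewritten during an activation of component $P_i$ that is not the first activation of $P_i$ after $v$, then $A$ would still be present, and rewritable by $P_i$, when that earlier activation of $P_i$ ends---contradicting the $t$-part of the mode. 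Hence each component can consume occurrences of $v$ only during its first activation after $v$, and there it performs at most $k$ steps; since all nonterminal occurrences of $v$ must eventually be consumed, $|v|_{N}\leq n\cdot k$. With this lemma in hand, the paper's simulation is the straightforward one: keep the nonterminals of $G$ untouched and put all bookkeeping (current component, a step counter $\kappa\leq k$, and the $t$-check via trap rules $A_{ij}\rightarrow F$ with empty success fields) into the labels; the sentential forms of the programmed grammar then coincide with those of $G$, so the index bound is inherited.

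Your proposed repair does not close this gap. The appeal to Theorem~\ref{thm:nsf} is circular: the nonterminal separation form is only available for programmed grammars \emph{already known} to generate a language of finite index, which is precisely what you are trying to establish. Nor do the decorations $(i,j)$ bound anything by themselves: nothing prevents many nonterminals from carrying the same decoration, so ``at most $n\cdot k$ distinct pairs co-occur'' does not yield ``at most $n\cdot k$ nonterminals.'' Finally, a ``disciplined order'' of processing cannot exist for a sentential form with more than $n\cdot k$ nonterminals, because such a form cannot be terminated at all; the correct move is to prove that such forms never arise in successful derivations, not to schedule around them. Your construction of the simulating grammar itself (step counters, the appearance-checking $t$-test with trap states, component switching) is essentially sound and close to the paper's, so the argument is completed once the combinatorial lemma above replaces the decoration/NSF reasoning.
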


\begin{proof}
Let $G=(N,T,S,P_{1},P_{2},\dots ,P_{n})$ be a CD grammar system working in
the $(t\wedge =k)$-mode. Let $P_{i}=\{\,A_{ij}\rightarrow w_{ij}\mid 1\leq
j\leq N(i)\,\}$. $G$ can be simulated by the programmed grammar $G^{\prime
}=(N\cup \{F\},T,S,P,\Lambda ,\sigma ,\phi )$, with label-set 
\begin{eqnarray*}
\Lambda  &=&\{\,(i,j,\kappa )\mid 1\leq i\leq n,1\leq j\leq N(i),1\leq
\kappa \leq k\,\} \\
&\cup &\{\,(i,j)\mid 1\leq i\leq n,1\leq j\leq N(i)\,\}.
\end{eqnarray*}

Now, let $1\leq i\leq n,1\leq j\leq N(i),1\leq \kappa \leq k$. Then, we set $%
\Lambda ((i,j,\kappa ))=A_{ij}\rightarrow w_{ij}$, success field $\sigma
((i,j,\kappa ))=\{\,(i,j^{\prime },\kappa +1)\mid 1\leq j^{\prime }\leq
N(i)\,\}$, if $\kappa <k$, and $\sigma ((i,j,k))=\{(i,1)\}$, and failure
field $\phi ((i,j,\kappa ))=\emptyset $. Moreover, we let $\Lambda
((i,j))=A_{ij}\rightarrow F$, failure field $\phi ((i,j))=\{(i,j+1)\}$ if $%
j<N(i)$, $\phi ((i,N(i)))=\{\,(i^{\prime },j^{\prime },1)\mid 1\leq
i^{\prime }\leq n,1\leq j^{\prime }\leq N(i^{\prime })\,\}$, and success
field $\sigma ((i,j))=\emptyset $.

An application of~$P_{i}$ is simulated by a sequence of productions labeled
with 
\begin{equation*}
(i,j_{1},1),\ldots ,(i,j_{k},k),(i,1),\ldots ,(i,N(i)).
\end{equation*}%
In each such sequence, at most~$k$ symbols can be processed. Since there are~%
$n$ sets of productions $P_{i}$, only sentential forms containing at most $%
n\cdot k$ nonterminals can hope for termination. Therefore, the simulating
programmed grammar has index at most~$n\cdot k$, which can be seen by
induction.

The $(t\wedge \leq k)$-mode case can be treated in a similar way: we just
define $\phi ((i,j,\kappa ))$ to equal $\sigma ((i,j,\kappa ))$ instead of
taking $\phi ((i,j,\kappa ))=\emptyset $.
\end{proof}

Before we can establish the infinite hierarchies for the families $\mathcal{L%
}(\mathrm{CD}_{n},\mathrm{CF}[-\lambda ],(t\wedge \Delta k))$ with respect
to~$n$ and~$k$, respectively, we need the following theorem shown in~\cite[%
page 160, Theorem 3.1.7]{DasPau89}:

\begin{theorem}
\label{seplan}$S_{n+1}=\{\,b(a^{i}b)^{2\cdot (n+1)}\mid i\geq 1\,\}\in 
\mathcal{L}_{n+1}(\mathrm{P},\mathrm{CF})\setminus \mathcal{L}_{n}(\mathrm{P}%
,\mathrm{CF})$ for all $n\in \mathbb{N}$.
\end{theorem}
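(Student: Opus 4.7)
The plan is to handle the two parts of the theorem separately: membership $S_{n+1}\in \mathcal{L}_{n+1}(\mathrm{P},\mathrm{CF})$ and non-membership $S_{n+1}\notin \mathcal{L}_{n}(\mathrm{P},\mathrm{CF})$.

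For the upper bound I would exhibit a programmed context-free grammar $G$ of index $n+1$ directly. The design uses $n+1$ ``counter'' nonterminals $A_{1},\ldots ,A_{n+1}$, each intended to produce a substring of the form $a^{i}ba^{i}$, so that the skeleton $bA_{1}bA_{2}b\cdots bA_{n+1}b$ eventually becomes $b(a^{i}b)^{2(n+1)}$. A single start rule derives this skeleton; an increment phase then cycles through the programmed rules $A_{j}\to aA_{j}a$ in round-robin order, each complete round raising the common exponent by one; and a nondeterministic branch into a termination phase chains the rules $A_{j}\to b$ in order. At every stage the sentential form contains at most $n+1$ nonterminals, each occurring once, so $\mathit{ind}(G)=n+1$ and $L(G)=S_{n+1}$.

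For the lower bound the plan is a pumping argument. Assume for contradiction that some $G'\in (\mathrm{P},\mathrm{CF})$ generates $S_{n+1}$ with $\mathit{ind}(G')\le n$; by Theorem~\ref{thm:nsf} I may normalize $G'$ into NSF, so every derivable sentential form carries at most $n$ distinct nonterminals, each occurring uniquely. For arbitrarily large $i$, fix a derivation of $w_{i}=b(a^{i}b)^{2(n+1)}$ and apply the pigeonhole principle to the finitely many (label, nonterminal-profile) configurations along it. Some configuration then repeats, yielding a loop $\alpha \Longrightarrow ^{*}\alpha '$ that preserves the profile, and the loop may be iterated $k$ times: each active nonterminal $A$ in $\alpha$ ends up dressed by $\beta_{A}^{k}$ on the left and $\gamma_{A}^{k}$ on the right of its eventual terminal image $t_{A}$. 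Membership in $S_{n+1}$, which has exactly $2n+3$ occurrences of $b$, forces $\beta_{A},\gamma_{A}\in \{a\}^{*}$, and then each of the at most $n$ active nonterminals contributes $a$'s to at most two of the $2(n+1)$ maximal $a$-blocks of the word---namely the leftmost and rightmost $a$-block touched by $t_{A}$. Summing, at most $2n<2(n+1)$ blocks receive new $a$'s, so iterating the pump breaks the requirement that all $2(n+1)$ blocks share a common length, the desired contradiction.

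The upper-bound construction is routine. The delicate step, and the main obstacle, is the block-accounting above: one must justify that an active nonterminal's terminal image spans at most two $a$-blocks and that pumping cannot redistribute $b$'s. Since this is precisely the content of the finite-index pumping lemma for programmed grammars, the cleanest route I would take is to invoke that lemma directly from~\cite{DasPau89}, where $S_{n+1}$ is in fact presented as a canonical separating language.
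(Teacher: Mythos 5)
The paper does not prove this statement at all: it is imported verbatim from \cite[page 160, Theorem 3.1.7]{DasPau89}, so your closing move---invoking that source, where $S_{n+1}$ is indeed the canonical separating witness---is exactly what the paper does. Your sketch of the underlying argument is sound: the index-$(n+1)$ grammar built on the skeleton $bA_{1}bA_{2}b\cdots bA_{n+1}b$ with round-robin incrementation is the standard membership witness, and the block count ($\leq 2n$ pumpable insertion points versus $2(n+1)$ synchronized $a$-blocks, with the $b$-count forcing the pumped strings into $\{a\}^{*}$) is the correct reason for non-membership. The one step you would still have to supply to make the lower bound self-contained is the passage from a repeated (label, profile) configuration to a per-nonterminal pump $A\Rightarrow^{*}\beta_{A}A\gamma_{A}$: the descendant map between the two occurrences of the configuration is in general only a permutation of the at most $n$ surviving nonterminal occurrences, so the loop must first be raised to a suitable power (and non-productive cycles discarded) before the decomposition $\beta_{A}^{k}t_{A}\gamma_{A}^{k}$ is legitimate. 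Since you explicitly flag this as the delicate point and defer to the cited finite-index pumping lemma for it, your proposal lands exactly where the paper does.
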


These separating languages can also be generated by CD grammar systems
working in the internally hybrid modes $(t\wedge =k+1)$ and $(t\wedge \leq
k+1)$:

\begin{theorem}
\label{SnkinCD} Let $n,k\in {\mathbb{N}}$.

\begin{enumerate}
\item $S_{n\cdot k}\in \mathcal{L}(\mathrm{CD}_{n+1},\mathrm{CF}-\lambda
,(t\wedge =k+1))$, i.e., $S_{n\cdot k}$ can be generated by a CD-grammar
system with $n+1$ context-free components, without erasing productions,
working in the $(t\wedge =k+1)$-mode.

\item $S_{n\cdot k}\in \mathcal{L}(\mathrm{CD}_{2n+1},\mathrm{CF}-\lambda
,(t\wedge \leq k+1))$, i.e., $S_{n\cdot k}$ can be generated by CD-grammar
system with $2\cdot n+1$ context-free components, without erasing
productions, working in the $(t\wedge \leq k+1)$-mode.
\end{enumerate}
\end{theorem}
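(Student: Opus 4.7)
For both parts I would give explicit CD grammar systems, modelled on the synchronous-growth construction of Example~\ref{exa-hybrid1}. For part~1, let $G_1=(N,T,S,P_1,\ldots,P_{n+1})$ have components working in $(t\wedge =k+1)$-mode. The nonterminals would encode the $2nk$ block positions of the word $b(a^ib)^{2nk}\in S_{n\cdot k}$, say $A_1,\ldots,A_{2nk}$, together with primed copies $A_1',\ldots,A_{2nk}'$ used as ``colours'' to synchronise growth. An initialisation component generates the frame $bA_1bA_2\cdots bA_{2nk}b$ from $S$ via a chain of unit productions split across visits so that each visit performs exactly $k+1$ steps and then exhausts. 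The remaining components then implement one doubling round by alternating the rules $A_j\to aA_j'$ and $A_j'\to A_j$, with the $2nk$ positions distributed across the $n$ doubling components so that an integer number of cycles through them produces exactly one increment per block. Final termination is triggered in the last round by an alternative rule set $A_j\to a$, which, by the same distribution argument, likewise fills exactly $k+1$ steps per visit.

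For part~2 the mode $(t\wedge \leq k+1)$ is weaker: a visit may stop after any $1\leq m\leq k+1$ steps, provided the component then exhausts. To still enforce the synchronous behaviour of part~1 I would pair each main doubling component $P_i$ with a \emph{sentinel} component $Q_i$ that verifies that $P_i$ has performed its full $k+1$ rewrites. Concretely, $P_i$ would introduce on its last intended rewrite a distinguishing marker whose presence is required for $Q_i$ to exhaust within $\leq k+1$ steps; if $P_i$ stops early, the marker is absent and the derivation cannot proceed. Together with the initialisation, this yields $2n+1$ components in total, matching the statement. The prolongation technique of \cite{FerFreHol01}, already used elsewhere in this paper, can furnish the padding needed to reach the exact visit length when necessary.

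The main obstacle is proving $L(G_i)=S_{n\cdot k}$ for $i=1,2$, in particular the inclusion $L(G_i)\subseteq S_{n\cdot k}$: one must exclude all ``cheating'' derivations that mix growing and terminating rules within a single visit, skip a reset, or stop a visit early. I would handle this by induction on the number of doubling rounds, showing that any such deviation either leaves a primed nonterminal $A_j'$ with no applicable rule in any component, or violates the mode constraint (failing to exhaust or failing to hit the exact/upper bound of $k+1$ steps), so that no terminal word is produced. The technical heart of the construction is balancing the $2nk$ position rewrites per round against the $(n+1)(k+1)$ (respectively $(2n+1)(k+1)$) rewrites available per full component cycle; the primed/unprimed scheme together with auxiliary unit productions, inserted where needed to align per-visit rewrite counts, is what makes this bookkeeping work out.
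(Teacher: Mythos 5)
There is a genuine gap in your construction, and it lies exactly where you locate ``the technical heart'': the bookkeeping does not work out as described. You distribute $2nk$ position nonterminals $A_1,\ldots,A_{2nk}$ (one per $a$-block of $b(a^ib)^{2nk}$) over $n$ doubling components, so each component owns $2k$ of them. But a component working in the $(t\wedge =k+1)$-mode must, after performing exactly $k+1$ rewriting steps, be unable to perform any further step---that is the $t$-part of the mode. If all $2k$ of its positions are rewritable at the start of a visit, the component cannot exhaust after $k+1$ steps for any $k>1$, and spreading the work over ``an integer number of cycles'' does not help: after the first visit applicable rules remain, so that visit is already illegal. The paper's construction resolves precisely this point by letting each position symbol grow on \emph{both} sides, $(A_j,0)\rightarrow a(A_j,1)a$, so that only $nk$ position symbols are needed for the $2nk$ blocks; each doubling component then owns exactly $k$ positions, and the remaining one step of its budget is spent on a travelling control symbol via $(q_i,0)\rightarrow (q_i,1)$. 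That control symbol, absent from your sketch, is also what forces the components to be applied in a fixed cyclic order and what lets the reset component $P_0$ detect cheating in the $=$-mode: if a component mixes a terminating rule $(A_j,0)\rightarrow aba$ with growth rules in the same visit, $P_0$ finds fewer than $k$ primed symbols to restore and cannot complete exactly $k+1$ steps, so the derivation blocks.

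Your sentinel idea for part~2 misdiagnoses the difficulty of the $(t\wedge \leq k+1)$-mode. A component in this mode cannot ``stop early'' while applicable rules remain---the $t$-condition already forces exhaustion---so there is nothing for a sentinel to verify; moreover, nothing in a CD grammar system forces your $Q_i$ to be the component applied next, so a missing marker would at best block one continuation rather than all of them. The real loss in passing from $=k+1$ to $\leq k+1$ is that the exact-step-count check performed by the reset component disappears, so a component could legally perform fewer position rewrites of one kind and more of another in a single visit. The paper repairs this by splitting each doubling component $P_i$ into two components $P_{i,1}$ (growth rules only) and $P_{i,2}$ (termination rules only); exhaustion then guarantees that each visit treats all $k$ owned positions uniformly, and this splitting---not a collection of sentinels---is where the count $2n+1$ comes from.
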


\begin{proof}
We first construct a CD grammar system 
\begin{equation*}
G=(N,\{a,b\},(S,0),P_{0},P_{1,1}\cup P_{1,2},\ldots ,P_{n,1}\cup P_{n,2})
\end{equation*}%
working in the $(t\wedge =k+1)$-mode generating language~$S_{n\cdot k}$. Let 
\begin{eqnarray*}
N &=&\{\,(S,i)\mid 0\leq i\leq n\,\} \\
&\cup &\{\,(q_{i},0),(q_{i},1)\mid 1\leq i\leq n\,\} \\
&\cup &\{\,(t_{i},j),(t_{i}^{\prime },j)\mid 1\leq i\leq n,0\leq j\leq k\,\}
\\
&\cup &\{\,(A_{i},0),(A_{i},1)\mid 1\leq i\leq k\cdot n\,\}
\end{eqnarray*}%
and let the set of productions be as follows: 
\begin{eqnarray*}
P_{0} &=&\{(S,0)\rightarrow (S,1),(S,k)\rightarrow
(q_{1},0)(A_{1},0)b(A_{2},0)b\ldots b(A_{n\cdot k},0)b\} \\
&\cup &\{\,(S,i)\rightarrow (S,i+1)\mid 1\leq i<k\,\} \\
&\cup &\{\,(A_{i},1)\rightarrow (A_{i},0)\mid 1\leq i\leq n\cdot k\,\} \\
&\cup &\{\,(q_{i},1)\rightarrow (q_{i+1},0)\mid 1\leq i<n\,\}\cup
\{(q_{n},1)\rightarrow (q_{1},0),(q_{n},1)\rightarrow (t_{1},0)\} \\
&\cup &\{\,(t_{i}^{\prime },j)\rightarrow (t_{i}^{\prime },j+1)\mid 1\leq
i\leq n,0\leq j<k\,\} \\
&\cup &\{\,(t_{i}^{\prime },k)\rightarrow (t_{i+1},0)\mid 1\leq i<n\,\}\cup
\{(t_{n}^{\prime },k)\rightarrow b\}.
\end{eqnarray*}%
For every $1\leq i\leq n$ let $P_{i}=P_{i,1}\cup P_{i,2}$ where 
\begin{eqnarray*}
P_{i,1} &=&\{(q_{i},0)\rightarrow (q_{i},1)\} \\
&\cup &\{\,(A_{j},0)\rightarrow a(A_{j},1)a\mid (i-1)\cdot k\leq j\leq
i\cdot k\,\}\quad \text{and} \\
P_{i,2} &=&\{(t_{i},0)\rightarrow (t_{i}^{\prime },0)\} \\
&\cup &\{\,(A_{j},0)\rightarrow aba\mid (i-1)\cdot k\leq j\leq i\cdot k\,\}.
\end{eqnarray*}

The only way to start the derivation is to use~$P_{0}$ obtaining the word 
\begin{equation*}
(q_{1},0)(A_{1},0)b(A_{2},0)b\ldots b(A_{n\cdot k},0)b.
\end{equation*}%
To continue the derivation one always has to apply~$P_{i}$ and~$P_{0}$ in
sequence: $P_{i}$ is only successfully applicable to a sentential form
beginning with a letter $(q_{i},0)$ or $(t_{i},0)$, and $P_{0}$ is only
successfully applicable to a sentential form beginning with a letter $%
(q_{i},1)$ or $(t_{i}^{\prime },0)$.

Assume we have a sentential form starting with letter $(q_{i},0)$, then
rules from~$P_{i}$ replace exactly~$k$ occurrences of nonterminals $%
(A_{j},0) $, for $(i-1)\cdot k\leq j\leq i\cdot k$, by $a(A_{j},1)a$ or $aba$%
, respectively, and the label is changed to $(q_{i},1)$. Now applying the
corresponding rules from $P_{0}$ (the only way to continue), the derivation
would block if at least one of the symbols $(A_{j},0)$ from the previous
step were replaced by $aba$. This ensures that all previously used
productions are non-terminating. In case the sentential from starts with a
letter $(t_{i},0)$, then an application of~$P_{i}$ followed by~$P_{0}$
checks whether the terminating rules form~$P_{i}$ were all used or not.

Thus, starting with the word $(q_{1},0)(A_{1},0)b(A_{2},0)b\ldots
b(A_{n\cdot k},0)b$, one cycle, i.e., an application of $%
P_{1},P_{0},P_{2},P_{0},\ldots ,P_{n}$, and $P_{0}$ leads to 
\begin{equation*}
(q_{1},0)a(A_{1},0)aba(A_{2},0)ab\ldots ba(A_{n\cdot k},0)ab\,,
\end{equation*}%
and in general, running the cycle~$\ell $ times, to the word 
\begin{equation*}
(q_{1},0)a^{\ell }(A_{1},0)a^{\ell }ba^{\ell }(A_{2},0)a^{\ell }b\ldots
ba^{\ell }(A_{n\cdot k},0)a^{\ell }b.
\end{equation*}%
Note that in the last application of~$P_{0}$ we could have taken the rule $%
(q_{n},1)\rightarrow (t_{1},0)$ in order to terminate the derivation process
after having finished the next cycle. After that cycle the grammar~$G$ has
generated the word $b(a^{\ell +1}b)^{2n\cdot k}$.

In case of the $(t\wedge \leq k+1)$-mode, we use the same construction as
above, but now treat each $P_{i,j}$, for $1\leq i\leq n$ and $1\leq j\leq 2$%
, as an independent component of the grammar. This gives the bound $2\cdot
n+1$ on the number of components.
\end{proof}

Now we are ready to investigate the families $\mathcal{L}(\mathrm{CD}_{n},%
\mathrm{CF}[-\lambda ],(t\wedge \Delta k))$ in more detail. First, let the
number of components $n$ be fixed.

\begin{corollary}
Let $\Delta \in \{\leq ,=\}$ and $n\in {\mathbb{N}}$ be fixed. The hierarchy
of the families of languages $\mathcal{L}_{k}:=\mathcal{L}(\mathrm{CD}_{n},%
\mathrm{CF}[-\lambda ],(t\wedge \Delta k))$ with respect to~$k\in {\mathbb{N}%
}$ is infinite, i.e., for every $k\in {\mathbb{N}}$ there exists an $m\in {%
\mathbb{N}}$, $m>k$, such that $\mathcal{L}_{k}\subset \mathcal{L}_{m}$.
\end{corollary}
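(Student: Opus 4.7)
The plan is to combine the upper bound on finite index given by Theorem~\ref{inclusion} with the separating languages $S_{m}$ of Theorem~\ref{seplan} and the explicit constructions of Theorem~\ref{SnkinCD}. Note first that for $n=1$ the statement is vacuously false, since a single component in a $(t\wedge\Delta k)$-mode can only produce finite languages (as shown in the opening lemma of this section), so that $\mathcal{L}_{k}=\mathcal{L}(\mathrm{FIN})$ for every $k$; I therefore restrict attention to $n\geq 2$.

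Fix $n\geq 2$ and let $k\in\mathbb{N}$ be given. By Theorem~\ref{inclusion}, every language in $\mathcal{L}_{k}$ lies in $\mathcal{L}_{nk}(\mathrm{P},\mathrm{CF}[-\lambda],\mathrm{ac})$, which (by the remark following Theorem~\ref{thm:nsf}, i.e.\ the coincidence of programmed grammars with and without appearance checking under the finite index restriction) equals $\mathcal{L}_{nk}(\mathrm{P},\mathrm{CF}[-\lambda])$. Now set $k^{\ast}:=(n-1)(nk+1)$. Since $n\geq 2$ we have $k^{\ast}-1=(n-1)(nk+1)-1\geq nk$, so Theorem~\ref{seplan} guarantees that the separating language $S_{k^{\ast}}$ does \emph{not} lie in $\mathcal{L}_{nk}(\mathrm{P},\mathrm{CF})$, and consequently $S_{k^{\ast}}\notin \mathcal{L}_{k}$.

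To place $S_{k^{\ast}}$ inside some $\mathcal{L}_{m}$ with $m>k$, I apply Theorem~\ref{SnkinCD}(1) with its parameters chosen as $n-1$ and $k':=nk+1$, which gives $S_{k^{\ast}}=S_{(n-1)k'}\in \mathcal{L}(\mathrm{CD}_{n},\mathrm{CF}-\lambda,(t\wedge =nk+2))$. For the case $\Delta={\leq}$, the elementary inclusion $\mathcal{L}(\mathrm{CD}_{n},\mathrm{CF}[-\lambda],(t\wedge =j))\subseteq \mathcal{L}(\mathrm{CD}_{n},\mathrm{CF}[-\lambda],(t\wedge \leq j))$ (a direct consequence of the predicate definitions) transfers this membership to the $\leq$-mode. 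Finally I choose $m:=k(nk+2)$ and invoke the prolongation technique (from the paragraph following Theorem~\ref{the-fin.hier1}) twice: once with factor $nk+2$ to obtain $\mathcal{L}_{k}\subseteq \mathcal{L}_{m}$, and once with factor $k$ (applied to $S_{k^{\ast}}\in \mathcal{L}_{nk+2}$) to obtain $S_{k^{\ast}}\in \mathcal{L}_{m}$. Since also $m>k$, we conclude $\mathcal{L}_{k}\subsetneq \mathcal{L}_{m}$.

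The main subtlety is the $=$-mode: unlike in the $\leq$-case one does not have $\mathcal{L}_{k}\subseteq \mathcal{L}_{k+1}$ for free, so the inclusion $\mathcal{L}_{k}\subseteq \mathcal{L}_{m}$ must itself be established, and this is exactly why $m$ is selected as a common multiple of $k$ and $nk+2$, so that both prolongations apply simultaneously. Everything else is arithmetic bookkeeping around the index bound $nk$ coming from Theorem~\ref{inclusion} and the parameter matching required by Theorem~\ref{SnkinCD}(1).
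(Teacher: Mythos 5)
Your treatment of the case $\Delta={=}$ is sound and follows essentially the same route as the paper (combine Theorem~\ref{SnkinCD}(1) with Theorem~\ref{inclusion} and Theorem~\ref{seplan}); you are in fact more explicit than the paper on two points, namely the degenerate case $n=1$ (where every level collapses to $\mathcal{L}(\mathrm{FIN})$, so the statement as literally written fails) and the choice of $m$ as a common multiple of $k$ and $nk+2$, which is what actually turns the non-inclusion $\mathcal{L}_{m}\not\subseteq\mathcal{L}_{k}$ into the claimed strict inclusion $\mathcal{L}_{k}\subset\mathcal{L}_{m}$ via the prolongation technique.

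However, there is a genuine gap in the case $\Delta={\leq}$. The ``elementary inclusion'' $\mathcal{L}(\mathrm{CD}_{n},\mathrm{CF}[-\lambda],(t\wedge =j))\subseteq\mathcal{L}(\mathrm{CD}_{n},\mathrm{CF}[-\lambda],(t\wedge \leq j))$ is \emph{not} a consequence of the predicate definitions. What the definitions give is only $L_{(t\wedge =j)}(G)\subseteq L_{(t\wedge\leq j)}(G)$ for each fixed system $G$, i.e., switching the mode can only \emph{add} words; it does not preserve the generated language, so membership of $S_{k^{\ast}}$ does not transfer. For the concrete system of Theorem~\ref{SnkinCD}(1) the inclusion is in fact strict: in the $=$-mode, prematurely terminating one position via $(A_{j},0)\rightarrow aba$ while other positions continue is punished, because the subsequent application of $P_{0}$ can then no longer perform its full quota of steps; in the $\leq$-mode this shortfall is permitted, and the system generates words $ba^{i_{1}}ba^{i_{1}}b\cdots$ with unequal exponents, which lie outside $S_{n\cdot k}$. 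This is precisely why the paper proves part~2 of Theorem~\ref{SnkinCD} with a separate construction using $2n+1$ components. To repair your argument, invoke Theorem~\ref{SnkinCD}(2) with a parameter $n'$ satisfying $2n'+1\leq n$ and then use the trivial inclusion on the number of components; this works for $n\geq 3$ (which is also the range the paper's own proof covers for the $\leq$-mode --- neither your argument nor the paper's settles $n=2$ with $\Delta={\leq}$), and the rest of your bookkeeping (index bound from Theorem~\ref{inclusion}, exclusion via Theorem~\ref{seplan}, prolongation to a common multiple) goes through unchanged.
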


\begin{proof}
We first consider the $(t\wedge =k)$-mode. By the preceding theorem we have
that 
\begin{equation*}
S_{n\cdot m}\in \mathcal{L}(\mathrm{CD}_{n+1},\mathrm{CF}[-\lambda
],(t\wedge ={m+1})).
\end{equation*}
On the other hand, $S_{n\cdot m}\notin \mathcal{L}_{n\cdot m-1}(\mathrm{P},%
\mathrm{CF})$ according to Theorem~\ref{seplan} and, because of 
\begin{equation*}
\mathcal{L}(\mathrm{CD}_{n+1},\mathrm{CF}[-\lambda ],(t\wedge ={k}%
))\subseteq \mathcal{L}_{(n+1)\cdot k}(\mathrm{P},\mathrm{CF})
\end{equation*}
according to Theorem~\ref{inclusion}, $S_{n\cdot m}$ therefore cannot belong
to $\mathcal{L}(\mathrm{CD}_{n+1},\mathrm{CF}[-\lambda ],(t\wedge ={k}))$,
provided $n\cdot m-1\geq (n+1)\cdot k$, i.e., $m\geq \frac{n+1}{n}k+\frac{1}{%
n}$.

For the $(t\wedge \leq k)$-mode, we can argue in a similar way: By the
preceding theorem we have that 
\begin{equation*}
S_{n\cdot m}\in \mathcal{L}(\mathrm{CD}_{2n+1},\mathrm{CF}[-\lambda
],(t\wedge \leq {m+1}))
\end{equation*}%
and therefore $S_{n\cdot m}\in \mathcal{L}(\mathrm{CD}_{2n+2},\mathrm{CF}%
[-\lambda ],(t\wedge \leq {m+1}))$, too. On the other hand,~$S_{n\cdot
m}\notin \mathcal{L}_{n\cdot m-1}(\mathrm{P},\mathrm{CF})$ according to
Theorem~\ref{seplan} and, because of 
\begin{equation*}
\mathcal{L}(\mathrm{CD}_{2n+1},\mathrm{CF}[-\lambda ],(t\wedge \leq {k}%
))\subseteq \mathcal{L}_{(2n+1)\cdot k}(\mathrm{P},\mathrm{CF})\subseteq 
\mathcal{L}_{(2n+2)\cdot k}(\mathrm{P},\mathrm{CF})
\end{equation*}%
according to Theorem~\ref{inclusion}, $S_{n\cdot m}$ therefore cannot belong
to $\mathcal{L}(\mathrm{CD}_{2n+1},\mathrm{CF}[-\lambda ],(t\wedge \leq {k}%
))\cap \mathcal{L}(\mathrm{CD}_{2n+2},\mathrm{CF}[-\lambda ],(t\wedge \leq {k%
}))$, provided $n\cdot m-1\geq (2n+2)\cdot k$, i.e., $m\geq 2\frac{n+1}{n}k+%
\frac{1}{n}$.
\end{proof}

We now consider the other hierarchy, i.e., we fix~$k$ and vary the number of
components:

\begin{corollary}
Let $\Delta \in \{\leq ,=\}$ and $k\in {\mathbb{N}}$ be fixed. The hierarchy
of the families of languages $\mathcal{L}_{n}:=\mathcal{L}(\mathrm{CD}_{n},%
\mathrm{CF}[-\lambda ],(t\wedge \Delta k))$ with respect to $n\in {\mathbb{N}%
}$ is infinite, i.e., for every $n\in {\mathbb{N}}$ there exists an $m\in {%
\mathbb{N}}$, $m>n$, such that $\mathcal{L}_{n}\subset \mathcal{L}_{m}$.
\end{corollary}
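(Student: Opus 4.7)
The plan is to mimic the previous corollary while swapping the roles of the two parameters: fix the mode bound $k$ and let the number of components $n$ grow. The workhorse remains Theorem~\ref{SnkinCD}, which provides a source of languages of arbitrarily large programmed-grammar index sitting in CD families whose component count grows while the mode parameter stays bounded by $k$.

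Concretely, I would apply Theorem~\ref{SnkinCD}(1) with its internal parameter set to $k-1$ so that $(k-1)+1=k$. Assuming $k\geq 2$, this yields, for every $n'\in\mathbb{N}$, that $S_{n'(k-1)} \in \mathcal{L}(\mathrm{CD}_{n'+1},\mathrm{CF}[-\lambda],(t\wedge = k))$; and, analogously, Theorem~\ref{SnkinCD}(2) gives $S_{n'(k-1)} \in \mathcal{L}(\mathrm{CD}_{2n'+1},\mathrm{CF}[-\lambda],(t\wedge \leq k))$. Combined with Theorem~\ref{inclusion} and the observation following Theorem~\ref{thm:nsf} (appearance checking is superfluous at finite index), this delivers the outer bound $\mathcal{L}_n \subseteq \mathcal{L}_{nk}(\mathrm{P},\mathrm{CF}[-\lambda])$.

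The separation is then just arithmetic. By Theorem~\ref{seplan}, $S_{n'(k-1)} \notin \mathcal{L}_{n'(k-1)-1}(\mathrm{P},\mathrm{CF})$, so choosing $n' := \lceil (nk+1)/(k-1)\rceil$ forces $n'(k-1)-1 \geq nk$ and hence $S_{n'(k-1)} \notin \mathcal{L}_n$. Setting $m := n'+1$ in the equality mode and $m := 2n'+1$ in the at-most mode yields $m>n$ (using $(nk+1)/(k-1)\geq n$, valid for all $k\geq 2$) together with $S_{n'(k-1)} \in \mathcal{L}_m \setminus \mathcal{L}_n$, as required.

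The only genuine obstacle is the edge case $k=1$, where $n'(k-1)=0$ and the recipe above supplies no separating language. A separate argument would be needed there, most plausibly by exploiting the fact that under $(t\wedge\Delta 1)$ every visit to a component contributes exactly (or at most) one rewriting step, so that a direct component-counting witness can be constructed; but apart from this degenerate corner, the proof is a direct transposition of the previous corollary, with the counting inequality run in the other direction.
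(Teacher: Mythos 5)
Your argument is correct for $k\geq 2$ and is essentially the paper's own proof up to a shift of the parameter: the paper works with the witnesses $S_{m\cdot k}\in\mathcal{L}(\mathrm{CD}_{m+1},\mathrm{CF}[-\lambda],(t\wedge =k+1))$ and the bound $mk-1\geq n(k+1)$, which is exactly your construction after renaming $k+1$ as $k$; the sandwich between Theorem~\ref{SnkinCD}, Theorem~\ref{inclusion} (with appearance checking removed via the remark after Theorem~\ref{thm:nsf}), and Theorem~\ref{seplan}, the choice of $m=n'+1$ (resp.\ $2n'+1$), and the monotonicity in the number of components are all the same. The one point worth noting is that the $k=1$ corner you flag as needing a separate argument is not handled by the paper either: its proof, as written, establishes the hierarchy only for mode parameters of the form $k+1\geq 2$, so your explicit acknowledgement of that gap is, if anything, more careful than the source.
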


\begin{proof}
We argue in a similar way as in the preceding corollary. First consider the $%
(t\wedge =k)$-mode of derivation. We already know that%
\begin{equation*}
S_{m\cdot k}\in \mathcal{L}(\mathrm{CD}_{m+1},\mathrm{CF}[-\lambda
],(t\wedge =k{+1})).
\end{equation*}
according to Theorem~\ref{seplan}, but $S_{m\cdot k}\notin \mathcal{L}%
_{m\cdot k-1}(\mathrm{P},\mathrm{CF})$ according to Theorem~\ref{seplan}
and, because of 
\begin{equation*}
\mathcal{L}(\mathrm{CD}_{n},\mathrm{CF}[-\lambda ],(t\wedge ={k+1}%
))\subseteq \mathcal{L}_{n\cdot (k+1)}(\mathrm{P},\mathrm{CF})
\end{equation*}%
according to Theorem~\ref{inclusion}, $S_{m\cdot k}$ therefore cannot belong
to $\mathcal{L}(\mathrm{CD}_{n},\mathrm{CF}[-\lambda ],(t\wedge ={k+1}))$,
provided $m\cdot k-1\geq n\cdot (k+1)$, i.e., $m\geq \frac{k+1}{k}n+\frac{1}{%
k}$.

By a similar reasoning, in case of the $(t\wedge \leq k)$-mode, for 
\begin{equation*}
S_{m\cdot k}\in \mathcal{L}(\mathrm{CD}_{2m+1},\mathrm{CF}[-\lambda
],(t\wedge \leq k{+1}))\subseteq \mathcal{L}(\mathrm{CD}_{2m+2},\mathrm{CF}%
[-\lambda ],(t\wedge \leq k{+1}))
\end{equation*}%
we obtain $S_{m\cdot k}\notin \mathcal{L}(\mathrm{CD}_{2n+1},\mathrm{CF}%
[-\lambda ],(t\wedge ={k+1}))\cap \mathcal{L}(\mathrm{CD}_{2n+2},\mathrm{CF}%
[-\lambda ],(t\wedge ={k+1}))$, provided $m\cdot k-1\geq (2n+2)\cdot (k+1)$,
i.e., $m\geq 2\frac{k+1}{k}n+\frac{2(k+1)}{k}$ ($=2(1+\frac{1}{k})n+2(1+%
\frac{1}{k})$).

In both cases, we see that the hierarchy with respect to the number of
components is infinite.
\end{proof}

Finally, let us consider the hierarchies for the \textquotedblleft small
cases\textquotedblright\ of~$n$.

\begin{lemma}
\label{lem-cdn} Let $k\in {\mathbb{N}}$ and $\Delta \in \{\leq ,=\}$. 
\begin{eqnarray*}
\mathcal{L}(\mathrm{CD}_{1},\mathrm{CF}[-\lambda ],(t\wedge \Delta k))
&\subset &\mathcal{L}(\mathrm{CD}_{2},\mathrm{CF}[-\lambda ],(t\wedge \Delta
1)) \\
&\subset &\mathcal{L}(\mathrm{CD}_{3},\mathrm{CF}[-\lambda ],(t\wedge \Delta
k)).
\end{eqnarray*}
\end{lemma}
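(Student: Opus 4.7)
The strategy is to identify the small CD classes with well-studied language families, then exhibit a non-linear separator for the right-hand inclusion. By the two preceding lemmas of this section, the left-hand class equals $\mathcal{L}(\mathrm{FIN})$ and the middle class equals $\mathcal{L}(\mathrm{LIN})$, so the first strict inclusion reduces to the classical containment $\mathcal{L}(\mathrm{FIN}) \subsetneq \mathcal{L}(\mathrm{LIN})$, witnessed for instance by $\{a^n b^n \mid n \geq 1\}$.

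For the second inclusion I would first establish containment in two steps. Adjoining an inert third component (one whose left-hand sides never appear in any reachable sentential form, so that it is always terminating in zero steps) gives $\mathcal{L}(\mathrm{CD}_2, \mathrm{CF}[-\lambda], (t \wedge \Delta 1)) \subseteq \mathcal{L}(\mathrm{CD}_3, \mathrm{CF}[-\lambda], (t \wedge \Delta 1))$. Then invoking the prolongation technique recalled after Theorem \ref{the-fin.hier1} with multiplying factor $\ell = k$ lifts this to $\mathcal{L}(\mathrm{CD}_3, \mathrm{CF}[-\lambda], (t \wedge \Delta 1)) \subseteq \mathcal{L}(\mathrm{CD}_3, \mathrm{CF}[-\lambda], (t \wedge \Delta k))$.

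For strictness I would exhibit a non-linear (in fact, non-context-free) language in the right-hand class by applying Theorem \ref{SnkinCD}. In the case $\Delta = {=}$ with $k \geq 2$, part (1) with $n = 2$ and its inner parameter taken as $k - 1$ yields $S_{2(k-1)} \in \mathcal{L}(\mathrm{CD}_3, \mathrm{CF}-\lambda, (t \wedge = k))$, and Theorem \ref{seplan} certifies that $S_{2(k-1)}$ is not context-free as soon as $2(k-1) \geq 2$, hence in particular not linear. In the case $\Delta = \leq$ with $k \geq 3$, part (2) of the same theorem with $n = 1$ and inner parameter $k - 1$ analogously places the non-context-free $S_{k-1}$ into $\mathcal{L}(\mathrm{CD}_3, \mathrm{CF}-\lambda, (t \wedge \leq k))$.

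The main obstacle I foresee is the handful of small cases not reached by Theorem \ref{SnkinCD}, namely $k = 2$ for $\Delta = \leq$ and $k = 1$ for both modes, where the candidate $S_m$ degenerates to a linear or empty language. For these residual cases I would either adapt the construction of Theorem \ref{SnkinCD}(1) by subdividing one of its components in the style of part (2) of that theorem (so that a non-context-free $S_m$ can be produced within three $(t\wedge \leq 2)$-components), or provide an ad-hoc direct construction along the colouring-and-cycling lines of Example \ref{exa-hybrid1} to witness a non-linear language in $\mathcal{L}(\mathrm{CD}_3, \mathrm{CF}[-\lambda], (t \wedge \Delta 1))$. Verifying that such a three-component construction actually respects the very restrictive $(t \wedge \Delta 1)$ mode---in which each component must apply exactly (or at most) one rule and afterwards be blocked on every nonterminal it could rewrite---is where the real work would lie.
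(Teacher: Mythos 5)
Your identification of the first two classes with $\mathcal{L}(\mathrm{FIN})$ and $\mathcal{L}(\mathrm{LIN})$, so that the first strict inclusion reduces to $\mathcal{L}(\mathrm{FIN})\subset\mathcal{L}(\mathrm{LIN})$, is exactly the paper's argument, and your containment argument for the second inclusion is sound. The problem is the \emph{strictness} of the second inclusion for small $k$. Your route via Theorem~\ref{SnkinCD} only reaches $k\geq 2$ for $\Delta={=}$ and $k\geq 3$ for $\Delta={\leq}$, and you explicitly defer the remaining cases --- in particular $k=1$, where the claim reads $\mathcal{L}(\mathrm{LIN})\subset\mathcal{L}(\mathrm{CD}_{3},\mathrm{CF}[-\lambda],(t\wedge\Delta 1))$ --- to an unspecified ad-hoc construction, conceding that this is ``where the real work would lie.'' That deferred case is not a peripheral loose end: the lemma is stated for all $k\in{\mathbb{N}}$, and exhibiting a non-linear language under the very restrictive $(t\wedge\Delta 1)$-mode is the actual content of the second strict inclusion. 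As written, the proposal does not prove the lemma for $k=1$ (nor for $k=2$ with $\Delta={\leq}$).

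The paper does precisely the construction you postpone: it shows that the non-linear language $\{\,a^{n}b^{n}a^{m}b^{m}\mid n,m\in{\mathbb{N}}\,\}$ is generated in the $(t\wedge\Delta 1)$-mode by the three components $P_{1}=\{S\rightarrow AB,\ A^{\prime}\rightarrow A,\ B^{\prime}\rightarrow B\}$, $P_{2}=\{A\rightarrow aA^{\prime}b,\ A\rightarrow ab,\ B^{\prime}\rightarrow B^{\prime}\}$, and $P_{3}=\{B\rightarrow aB^{\prime}b,\ B\rightarrow ab,\ A\rightarrow A,\ A^{\prime}\rightarrow A^{\prime}\}$; the looping unit rules $B^{\prime}\rightarrow B^{\prime}$ in $P_{2}$ and $A\rightarrow A$, $A^{\prime}\rightarrow A^{\prime}$ in $P_{3}$ make the respective component non-terminating, hence unusable under any $t$-conjoined mode, until the preceding phase is finished --- this is how the one-step-and-blocked requirement is reconciled with a two-phase derivation. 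Once this base case is in place, the prolongation technique yields strictness for every $k>1$ in one stroke, which also makes your detour through Theorem~\ref{SnkinCD} (correct as far as it goes) unnecessary. Supply a concrete witness of this kind for $(t\wedge\Delta 1)$ and your proof closes.
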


\begin{proof}
By our previous considerations, we know that $\mathcal{L}(\mathrm{CD}_{1},%
\mathrm{CF}[-\lambda ],(t\wedge \Delta k))$ and $\mathcal{L}(\mathrm{CD}_{2},%
\mathrm{CF}[-\lambda ],(t\wedge \Delta 1))$ coincide with $\mathcal{L}(%
\mathrm{FIN})$ and $\mathcal{L}(\mathrm{LIN})$, respectively, which already
proves the first strict inclusion.

Now consider the non-linear language $\{\,a^{n}b^{n}a^{m}b^{m}\mid n,m\in {%
\mathbb{N}}\,\}$, which is generated by a CD grammar system 
\begin{equation*}
G=(\{S,A,B,A^{\prime },B^{\prime }\},\{a,b\},S,P_{1},P_{2},P_{3})
\end{equation*}%
taking $k=1$ with the following three components: 
\begin{eqnarray*}
P_{1} &=&\{S\rightarrow AB,A^{\prime }\rightarrow A,B^{\prime }\rightarrow
B\} \\
P_{2} &=&\{A\rightarrow aA^{\prime }b,A\rightarrow ab,B^{\prime }\rightarrow
B^{\prime }\} \\
P_{3} &=&\{B\rightarrow aB^{\prime }b,B\rightarrow ab,A\rightarrow
A,A^{\prime }\rightarrow A^{\prime }\}.
\end{eqnarray*}%
First, $P_{1}$ and $P_{2}$ have to be applied in sequence, say $n$ times,
until $P_{2}$ uses the rule $A\rightarrow ab$. Now, $P_{3}$ can be applied.
Then, $P_{1}$ and $P_{3}$ must be applied in sequence, say $m-1$ times,
until $P_{3}$ terminates the whole derivation using $B\rightarrow ab$. In
this way, a word $a^{n}b^{n}a^{m}b^{m}$ is derived. By the prolongation
technique, the claimed assertion follows for $k>1$.
\end{proof}

We conclude this section by remarking that the results presented in this
section (originally contained in the Technical Report~\cite{FerFreHol96})
have been employed to show the following theorem in~\cite{BorHol99} that
nicely complements our results here; we state these below with the notations
of our paper.

\begin{theorem}
\label{thm-borhol99} Let $n,k\geq 1$. Then we have

\begin{enumerate}
\item $\mathcal{L}(\mathrm{CD}_{n},\mathrm{CF}[-\lambda ],(t\wedge
=k))\subset \mathcal{L}(\mathrm{CD}_{n+2},\mathrm{CF}[-\lambda ],(t\wedge
=k+1))$ and

\item $\mathcal{L}(\mathrm{CD}_{n},\mathrm{CF}[-\lambda ],(t\wedge \leq
k))\subset \mathcal{L}(\mathrm{CD}_{2\cdot (n+1)},\mathrm{CF}[-\lambda
],(t\wedge \leq k+1))$.
\end{enumerate}
\end{theorem}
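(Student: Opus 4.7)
The theorem has two independent parts; each claim is a strict inclusion, so each requires both an inclusion construction and a separating witness. My overall strategy is to obtain the inclusion by a refinement of the prolongation technique recalled after Theorem~\ref{the-fin.hier1}, and to obtain strictness by combining the upper index bound of Theorem~\ref{inclusion} with the hierarchy results for programmed grammars of finite index (Theorem~\ref{seplan}) and the sharp upper bounds from Theorem~\ref{SnkinCD}.

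For the inclusion direction in part~1, given a $(\mathrm{CD}_n,\mathrm{CF}[-\lambda],(t\wedge =k))$-system $G$, I would construct an equivalent $(\mathrm{CD}_{n+2},\mathrm{CF}[-\lambda],(t\wedge =k+1))$-system $G'$ by forcing one additional synchronisation step per component activation. Concretely, I would introduce a phase-marker nonterminal together with coloured copies of the original nonterminals, so that a simulated component performs its $k$ original rewrites and is then forced to perform exactly one extra marker-advance step before blocking. Two auxiliary components $Q_1, Q_2$ suffice to cycle the marker between successive activations of the simulated components in the $=k$-setting. For part~2, the analogous construction in the $(t\wedge \leq k)$-setting must allow the simulated component to stop after any number of real steps from $1$ to $k$; each original component therefore has to be split into two variants, one for the ``full-budget'' case and one for ``early termination'', which yields the $2n+2 = 2(n+1)$ bound. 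The main obstacle in this direction is to design the coloured nonterminals and phase markers so that every component of $G'$ blocks after precisely the intended number of steps and no spurious derivations are introduced; this forces a rigid ``traffic-light'' discipline on which nonterminals a given component may touch.

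For strictness, I would invoke Theorem~\ref{inclusion} to bound the index of any language in the smaller family by $nk$, and then use the separating sequence $S_m$ of Theorem~\ref{seplan}, noting that $S_m \notin \mathcal{L}_{m-1}(\mathrm{P},\mathrm{CF})$. For part~1, applying Theorem~\ref{SnkinCD}(1) with its parameter $n$ set to $n+1$ places $S_{(n+1)k}$ in $\mathcal{L}(\mathrm{CD}_{n+2},\mathrm{CF}-\lambda,(t\wedge =k+1))$, and since $(n+1)k > nk$ this language witnesses the strict inclusion.

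For part~2 the analogous application of Theorem~\ref{SnkinCD}(2) would produce $S_{(n+1)k}$ only with $2(n+1)+1$ components rather than the $2(n+1)$ allowed, so the most delicate step of the proof is to close this one-component gap. I would try to do this by a tailored construction for $S_{(n+1)k}$ that saves one component, for instance by merging the base component $P_0$ used in the proof of Theorem~\ref{SnkinCD} into one of the layer components $P_{i,1}$ and reshuffling the cycle of ``traffic-light'' nonterminals accordingly; alternatively one could sharpen Theorem~\ref{inclusion} under the $(t\wedge \leq k)$-regime enough to make $S_{nk}$ itself serve as a witness. I expect this last step to be the main technical hurdle of the proof.
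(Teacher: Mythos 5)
First, a point of reference: the paper does not prove Theorem~\ref{thm-borhol99} at all. It is explicitly imported from the cited work of Bordihn and Holzer, with the remark that it was obtained there using the results of Section~\ref{sec-counting-components}. So there is no in-paper proof to measure you against, and your proposal has to stand on its own. Judged that way, the parts of your plan that do work are the ones that mirror the paper's own corollary-style arguments: for part~1 your separation is sound. Theorem~\ref{inclusion} bounds the index of every language in $\mathcal{L}(\mathrm{CD}_{n},\mathrm{CF}[-\lambda],(t\wedge=k))$ by $nk$ (and the remark after Theorem~\ref{thm:nsf} lets you drop the $\mathrm{ac}$), Theorem~\ref{SnkinCD}(1) with its parameter set to $n+1$ puts $S_{(n+1)k}$ into $\mathcal{L}(\mathrm{CD}_{n+2},\mathrm{CF}-\lambda,(t\wedge=k+1))$, and $(n+1)k-1\geq nk$ gives $S_{(n+1)k}\notin\mathcal{L}_{nk}(\mathrm{P},\mathrm{CF})$ by Theorem~\ref{seplan}. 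That is a complete witness for $\mathcal{L}(\mathrm{CD}_{n+2},\ldots,(t\wedge=k+1))\not\subseteq\mathcal{L}(\mathrm{CD}_{n},\ldots,(t\wedge=k))$.

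The genuine gaps are in everything else, and they are exactly where the content of the theorem lives. (i) The inclusions themselves are not established. The prolongation technique recalled in the paper only multiplies the step bound ($k\mapsto\ell\cdot k$ with the same number of components); it does not give $k\mapsto k+1$. Your phase-marker sketch is plausible in outline, but under the $t$-condition the auxiliary components are forced to make \emph{exactly} (resp.\ at most) $k+1$ steps themselves and then block, the marker has to be reset between activations, and it has to be disposed of at the end of a derivation without erasing rules in the $-\lambda$ case; none of this is constructed or verified, and the precise counts $n+2$ and $2(n+1)$ are asserted rather than derived. Since the theorem \emph{is} the claim that these particular component counts suffice, this is not a detail one can wave at. (ii) For part~2 you candidly concede that your witness does not fit: Theorem~\ref{SnkinCD}(2) with parameter $n+1$ needs $2(n+1)+1$ components, one more than allowed, and with parameter $n$ it only yields $S_{nk}$, whose index is exactly $nk$ and hence is not excluded from the smaller family by Theorem~\ref{inclusion}. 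The two repairs you float (saving a component in the construction, or sharpening Theorem~\ref{inclusion} in the $\leq$-regime) are left as hopes. So part~2 of the statement is not proved in either direction, and part~1 is proved only in the ``$\not\supseteq$'' direction; the proposal is an honest plan with the hardest steps still open.
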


\section{CD Grammar Systems and Programmed Grammars of Finite Index}

\label{sec-CDGS-finite-index}

In this section we consider the finite index property for CD and HCD grammar
systems and how they relate to programmed grammars of finite index in more
detail.

\begin{theorem}
Let $m\in {\mathbb{N}}$, $FI=\{t\}\cup \{\,=m^{\prime \prime },\geq
m^{\prime \prime },(\geq m^{\prime \prime }\wedge \leq m^{\prime }),(t\wedge
=k)$,\ $(t\wedge \leq k),(t\wedge \geq k)\mid m^{\prime },m^{\prime \prime
},k\in {\mathbb{N}},m^{\prime }\geq m^{\prime \prime }\geq m\,\}$, and~$F$
contain all the hybrid modes considered in this paper, i.e., $F=\{\,(t\wedge
=k)$,\ $(t\wedge \leq k)\mid k\in {\mathbb{N}}\,\}$. Let $f\in FI$. Then%
\begin{eqnarray*}
\mathcal{L}_{m}(\mathrm{P},\mathrm{CF}[-\lambda ]) &=&\mathcal{L}_{m}(%
\mathrm{CD}_{\infty },\mathrm{CF}[-\lambda ],f) \\
&=&\mathcal{L}_{m}(\mathrm{HCD}_{\infty },\mathrm{CF}[-\lambda ],F).
\end{eqnarray*}
\end{theorem}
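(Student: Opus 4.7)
The plan is to establish, for each $f\in FI$, the double inclusion between $\mathcal{L}_m(\mathrm{P},\mathrm{CF}[-\lambda])$ and $\mathcal{L}_m(\mathrm{CD}_\infty,\mathrm{CF}[-\lambda],f)$, and in parallel the same between $\mathcal{L}_m(\mathrm{P},\mathrm{CF}[-\lambda])$ and $\mathcal{L}_m(\mathrm{HCD}_\infty,\mathrm{CF}[-\lambda],F)$; the asserted equalities then follow by transitivity. The two essential constructions are programmed-to-(H)CD and (H)CD-to-programmed, both of which I would carry out while keeping the index at most $m$.

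For the programmed-to-CD direction, I would start from a programmed grammar $G$ of index $m$ generating $L$ and invoke Theorem~\ref{thm:nsf} to obtain an equivalent NSF grammar $G'=(N',T,P',S',\Lambda,\sigma',\emptyset)$ of the same index. The crucial NSF properties I exploit are that at most $m$ distinct nonterminals coexist in any derivable sentential form, each at most once, and that the label $r$ deterministically fixes which nonterminals are present. I would absorb the control state into the letters themselves, replacing each nonterminal $A$ by a tagged symbol $[A,r]$; in this way no new marker is needed and the CD index stays within $m$. The construction then introduces, for each pair $(r,r')$ with $r'\in\sigma'(r)$, a component $P_{r,r'}$ whose rules are (i)~the main rewriting rule $[A,r]\to\widehat w$, with $A\to w$ being the rule labeled $r$ and $\widehat w$ obtained from $w$ by tagging every nonterminal occurrence with $r'$, and (ii)~the relabeling rules $[X,r]\to[X,r']$ for each further nonterminal $X$ whose presence in the current form is forced by the NSF count function. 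In $t$-mode this component exhausts after one main rewrite plus at most $m-1$ relabelings, faithfully simulating one step of $G'$. For the other modes in $FI$, I would lengthen the simulated step by splitting the main rewrite (and, if needed, the relabelings) into longer chains $[A,r]=[A,r]^{(0)}\to[A,r]^{(1)}\to\cdots\to[A,r]^{(\ell)}\to\widehat w$, choosing $\ell$ so that the total step count of the component lands in the required window; the hypothesis $m''\geq m$ built into $FI$ provides exactly the slack needed for this.

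For the (H)CD-to-programmed direction, I would adapt the construction of Theorem~\ref{inclusion}: each component $(P_i,f_i)$ of an HCD grammar system of index $m$ becomes a block of labeled rules with success and failure fields encoding the mode $f_i\in F$ via a step counter in the labels. The only new nonterminal required is a sink symbol $F$ that traps unsuccessful derivations; since $F$ never appears in any sentential form on a successful path to a terminal string, it does not count towards the index of $L$, and the resulting programmed grammar still has index $m$. This yields membership in $\mathcal{L}_m(\mathrm{P},\mathrm{CF}[-\lambda],\mathrm{ac})$, and the final clause of Theorem~\ref{thm:nsf} (which gave $\mathcal{L}_m(\mathrm{P},\mathrm{CF}[-\lambda],\mathrm{ac})=\mathcal{L}_m(\mathrm{P},\mathrm{CF}[-\lambda])$ in the comment after the theorem) then lets us drop appearance checking. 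The CD case is the specialization $f_i=f$ throughout, with minor adjustments to handle the non-hybrid modes in $FI\setminus F$ by counter labels analogous to those already used for $(t\wedge\Delta k)$.

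The main technical obstacle, in my view, is the padding in the programmed-to-CD direction: the number of relabeling rules fired in a given component varies with the number of nonterminals currently present (which can range from $1$ up to $m$, driven by the NSF label), yet each mode $f\in FI$ demands a uniform step count. The hypothesis $m''\geq m$ is precisely what guarantees enough room to pad the $k\leq m$ simulated ``big steps'' so that their total lands in the prescribed interval, and the careful bookkeeping to make this work uniformly across all choices of $(r,r')$---while still preserving the deterministic character of the simulation extracted from NSF---is the most delicate part of the argument.
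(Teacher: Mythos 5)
Your proposal follows essentially the same route as the paper: starting from an NSF programmed grammar, tagging each nonterminal with the current label, introducing one component per pair $(p,q)$ with $q\in\sigma(p)$ consisting of the main rewrite plus relabeling rules, and padding with prolongation chains (using the slack $m''\geq m$) so that every productive component application hits the required step count; the converse direction likewise relies on checking that the known (H)CD-to-programmed simulations are index-preserving and on the collapse of appearance checking under finite index. The argument is correct and matches the paper's proof in all essential respects.
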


\begin{proof}
Looking through all the proofs showing the containment of HCD languages
within programmed languages with appearance checking, it is easily seen that
all these constructions preserve a finite index restriction.

Hence, it only remains to show that 
\begin{equation*}
\mathcal{L}_{m}(\mathrm{P},\mathrm{CF}[-\lambda ])\subseteq \mathcal{L}_{m}(%
\mathrm{CD}_{\infty },\mathrm{CF}[-\lambda ],f)
\end{equation*}%
for every $f\in FI$. Let $L\in \mathcal{L}_{m}(\mathrm{P},\mathrm{CF})$ be
generated by a programmed grammar $G=(N,T,P,S,\Lambda ,\sigma )$ in NSF.
Especially, there exists a function $f:\Lambda \rightarrow {\mathbb{N}}%
_{0}^{N}$ such that, if $S%
\mathrel{\mathop{        \hbox{$\Rightarrow$}}\limits^{\scriptstyle *}
    \limits_{\scriptstyle }}v%
\mathrel{\mathop{        \hbox{$\Rightarrow$}}\limits^{\scriptstyle }
    \limits_{\scriptstyle p}}w$ is a derivation in $G$, then $%
(f(p))(A)=|v|_{A}\leq 1$ for every nonterminal~$A$.

We construct a simulating CD grammar system $G^{\prime }$ given by 
\begin{equation*}
((N\times\Lambda )\cup(\{\,i\in{\mathbb{N}}\mid i\leq m\,\}\times N),
T,(1,S),\{P_I\}\cup\{\,P_{p,q}\mid p\in\Lambda \land q\in\sigma(p)\,\} )
\end{equation*}
of index~$m$ all of whose components are working in one of the modes $=m$, $%
\geq m$, $t$, $(\geq m,\leq m^{\prime })$ (with $m^{\prime }\geq m$), $%
(t\land=m)$, $(t\land\leq m)$, or $(t\land \geq m)$. Consider a production $%
\Lambda (p)=A\to w $ of~$G$. We can assume (check) that $(f(p))(A)=1$.
Furthermore, define 
\begin{equation*}
n:=\sum_{B\in N}(f(p))(B)\leq m
\end{equation*}
is the number of nonterminals in the current string (within a possible
derivation leading to an application of~$p$).

Let the homomorphism $h_{p,q}:(N\times\{p\}\cup T)^*\to (N\times\{q\}\cup
T)^*$ be defined by $(A,p)\mapsto (A,q)$ for $A\in N$, $a\mapsto a$ for $%
a\in T$.

For every $q\in\sigma$, we introduce a component $P_{p,q}$ within the CD
grammar system containing the following productions:

If $n=m$, then $(A,p)\rightarrow h_{p,q}(w)$ simulates the (successful)
application of rule~$p$. If $n<m$, then we prolong the derivation in the
following way: 
\begin{equation*}
(A,p)\rightarrow (1,A),\quad (1,A)\rightarrow (2,A),\quad \dots ,\quad
(m-n,A)\rightarrow h_{p,q}(w).
\end{equation*}%
$(B,p)\rightarrow h_{p,q}(B)$ for $B\in N\setminus \{A\}$ keeps track of the
information of the current state.

As initialization component, we take $P_I$ containing 
\begin{equation*}
(1,S)\to (2,S),\quad\dots,\quad (m,S)\to (S,p)
\end{equation*}
for every $p\in\Lambda $ such that $(f(p))(S)=1$.

Observe that, by induction, to every sentential form derivable from the
initial symbol $(1,S)$, any component applied to it can make either $0$
steps (so, we selected the wrong one) or exactly $m$ steps.

By a simple prolongation trick, we can also take components working in one
of the modes $=m^{\prime \prime }$, $\geq m^{\prime \prime }$, $(\geq
m^{\prime \prime },\leq m^{\prime })$ (with $m^{\prime }\geq m^{\prime
\prime }$), for some $m^{\prime \prime }\geq m$.

Since the construction given in Theorem~\ref{the-fin.ind.charac} is
index-preserving, we can also take arbitrary $(t\land =k)$ or $(t\land \leq
k)$ components instead of requiring $k\geq m$. Since the $t$-mode and the $%
(t\land \geq 1)$-mode are identical, the prolongation technique delivers the
result for the $(t\land\geq k)$-mode for $k\in{\mathbb{N}}$ in general.
\end{proof}

Our theorem readily implies a characterization of programmed languages of
general finite index. We summarize this fact together with results obtained 
\textit{via} a different simulation in \cite[Theorem~3.26]{Csuetal94all} in
the following corollary.

\begin{corollary}
Let $m,k,k^{\prime }\in {\mathbb{N}}$, $k\geq 2$, and $\Delta \in \{\leq
,=,\geq \}$. Then following families of languages coincide with $\mathcal{L}%
_{\mathit{fin\/}}(\mathrm{P},\mathrm{CF}[-\lambda ])$:

\begin{enumerate}
\item $\mathcal{L}_{\mathit{fin\/}}(\mathrm{CD}_{\infty },\mathrm{CF}%
[-\lambda ],t)$,

\item $\mathcal{L}_{\mathit{fin\/}}(\mathrm{CD}_{\infty },\mathrm{CF}%
[-\lambda ],=k)$,

\item $\mathcal{L}_{\mathit{fin\/}}(\mathrm{CD}_{\infty },\mathrm{CF}%
[-\lambda ],\geq k)$,

\item $\mathcal{L}(\mathrm{CD}_{\infty },\mathrm{CF}[-\lambda ],(t\wedge
=k)) $,

\item $\mathcal{L}(\mathrm{CD}_{\infty },\mathrm{CF}[-\lambda ],(t\wedge
\leq k))$, and

\item $\mathcal{L}_{\mathit{fin\/}}(\mathrm{CD}_{\infty },\mathrm{CF}%
[-\lambda ],(t\wedge \Delta k^{\prime }))$.
\end{enumerate}
\end{corollary}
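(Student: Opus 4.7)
The plan is to split the six characterizations into three groups and invoke one preceding result for each; the corollary is then essentially bookkeeping once those pieces are identified.

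For items 4 and 5, which state equalities without imposing a finite-index restriction on the CD side, I would first invoke Theorem~\ref{the-fin.ind.charac}, which identifies $\mathcal{L}(\mathrm{CD}_{\infty},\mathrm{CF}[-\lambda],(t\wedge\Delta k))$ with $\mathcal{L}_{\mathit{fin\/}}(\mathrm{P},\mathrm{CF}[-\lambda],\mathrm{ac})$ for $\Delta\in\{\leq,=\}$ and every $k\in\mathbb{N}$. Appearance checking is then stripped off by Theorem~\ref{thm:nsf}, whose NSF construction applied at each fixed index yields $\mathcal{L}_{m}(\mathrm{P},\mathrm{CF}[-\lambda],\mathrm{ac}) = \mathcal{L}_{m}(\mathrm{P},\mathrm{CF}[-\lambda])$ for every $m$, and hence, after taking the union over $m$, the claimed equality with $\mathcal{L}_{\mathit{fin\/}}(\mathrm{P},\mathrm{CF}[-\lambda])$.

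For items 1 and 6, the approach is to specialize the theorem immediately preceding the corollary. A direct inspection of the definition of $FI$ shows that $t\in FI$ and $(t\wedge\Delta k')\in FI$ for every $k'\in\mathbb{N}$ and every $\Delta\in\{\leq,=,\geq\}$, with no side condition tying $k'$ to the index parameter $m$. Consequently, for every $m$ the theorem gives $\mathcal{L}_{m}(\mathrm{P},\mathrm{CF}[-\lambda]) = \mathcal{L}_{m}(\mathrm{CD}_{\infty},\mathrm{CF}[-\lambda],f)$ for each such $f$, and a union over $m$ delivers items 1 and 6 at once.

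For items 2 and 3 the fixed $k\geq 2$ need not satisfy $k\geq m$, so the preceding theorem only supplies the equality for those $m$ with $k\geq m$. The missing direction, namely simulating an index-$m$ programmed grammar by a $=k$- or $\geq k$-mode CD grammar system with \emph{small} fixed $k$, is exactly what \cite[Theorem~3.26]{Csuetal94all} provides via a different simulation, so I would simply cite it. The only point that requires care---and is therefore the mild obstacle---is the asymmetry built into $FI$: its non-hybrid members carry the side condition $m''\geq m$, whereas its hybrid members $(t\wedge\Delta k)$ do not. This asymmetry is precisely why our new theorem covers items 1 and 4--6 but leaves items 2 and 3 to the cited source.
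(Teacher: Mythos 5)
Your proposal is correct and follows essentially the same route the paper intends: items~4 and~5 via Theorem~\ref{the-fin.ind.charac} together with the remark after Theorem~\ref{thm:nsf} that appearance checking can be removed under the finite index restriction, items~1 and~6 by specializing the theorem immediately preceding the corollary (whose set $FI$ indeed places no lower bound on the $k$ in the hybrid modes) and taking the union over the index $m$, and items~2 and~3 from \cite[Theorem~3.26]{Csuetal94all}, exactly as the paper's introductory sentence to the corollary indicates. Your observation about the asymmetry in $FI$ correctly identifies why the non-hybrid modes $=k$ and $\geq k$ with small fixed $k$ cannot be obtained from the new theorem alone and must be delegated to the cited external simulation.
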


As regards the number of components, we can state the following:

\begin{theorem}
Let $m\in {\mathbb{N}}$, $\Delta \in \{\leq ,=,\geq \}$. Then we have

\begin{enumerate}
\item $\mathcal{L}_{m}(\mathrm{CD}_{\infty },\mathrm{CF}[-\lambda ],\Delta
m)=\mathcal{L}_{m}(\mathrm{CD}_{3},\mathrm{CF}[-\lambda ],\Delta m)$ and

\item $\mathcal{L}_{m}(\mathrm{CD}_{\infty },\mathrm{CF}[-\lambda ],(t\wedge
\Delta m))=\mathcal{L}_{m}(\mathrm{CD}_{3},\mathrm{CF}[-\lambda ],(t\wedge
\Delta m))$.
\end{enumerate}
\end{theorem}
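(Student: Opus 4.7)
The plan is as follows. The inclusion $\mathcal{L}_m(\mathrm{CD}_3,\mathrm{CF}[-\lambda],f)\subseteq \mathcal{L}_m(\mathrm{CD}_\infty,\mathrm{CF}[-\lambda],f)$ is trivial by definition for both kinds of $f$, so only the reverse inclusion is at issue. Given $L$ in the infinite-component class, the preceding theorem places $L$ in $\mathcal{L}_m(\mathrm{P},\mathrm{CF}[-\lambda])$, and Theorem~\ref{thm:nsf} then provides a programmed grammar $G=(N,T,P,S,\Lambda,\sigma,\phi)$ in NSF with $\mathrm{ind}(G)\le m$ generating $L$. The task is thus to simulate $G$ by a CD grammar system with \emph{exactly} three components.

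The key structural features of NSF that I intend to exploit are: every derivable sentential form contains each nonterminal at most once, and the function $f:\Lambda\to\mathbb{N}_0^N$ fixes, given a label $p$ about to be applied, the precise multiset of nonterminals present (of cardinality at most $m$). This lets me carry the ``current label'' inside the sentential form by tagging each nonterminal as a pair $(A,p)$, so that the global programmed state is encoded locally.

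My three components will be $P_I$, $P_1$ and $P_2$. The initialization component $P_I$ walks $(1,S)\to (2,S)\to\cdots\to (m,S)\to (S,p_0)$ for the start label $p_0$ guaranteed by NSF, so that its single use makes exactly $m$ steps. The component $P_1$ simulates the rewriting part of an application of a rule $\Lambda(p)=A\to w$ with successor $q\in\sigma(p)$: it contains a production $(A,p)\to w'$ where $w'$ decorates the nonterminals of $w$ with an intermediate tag $\langle p,q\rangle$, together with identity prolongation rules $(B,p)\to(B,\langle p,q\rangle)$ for $B\neq A$, padded further by chains $(B,\langle p,q\rangle^{(j)})\to(B,\langle p,q\rangle^{(j+1)})$ so that application of $P_1$ can always be arranged to take precisely $m$ steps (using the NSF count). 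Component $P_2$ then rewrites the intermediate tags into plain $q$-tags and is likewise padded to $m$ steps. The three components then alternate as $P_I,P_1,P_2,P_1,P_2,\dots$, and the final rewriting of intermediates into terminals is built into $P_2$ when $q$ has no successor.

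The main obstacle will be guaranteeing that wrong guesses of $q\in\sigma(p)$ cannot be rescued. This is enforced by the intermediate-tag discipline: if $P_1$ commits to successor $q$, then only the matching productions in $P_2$ remove the $\langle p,q\rangle$ tags, and any other choice leaves a dead nonterminal that cannot be rewritten, which also blocks the mode condition from being met thereafter. A second delicate point concerns the $(t\wedge\Delta m)$ modes: I must verify that once $P_1$ has finished, no further rule of $P_1$ applies (so the $t$-part is satisfied), and likewise for $P_2$; this is automatic from the tag discipline, since $P_1$'s left-hand sides only carry tags of the form $p$, whereas $P_2$'s only carry tags $\langle p,q\rangle$. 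The mode parameter $\Delta m$ is matched by the padding scheme: for $\Delta\in\{=,\le\}$ we pad up to exactly $m$, and for $\Delta=\ge$ we prolong beyond $m$ as needed, invoking the prolongation technique already used in Section~\ref{sec-counting-components}. Finally, the NSF bound on sentential-form length guarantees that the resulting CD grammar system is itself of finite index $m$, so the construction stays inside $\mathcal{L}_m$.
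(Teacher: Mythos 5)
Your overall route for part~2 --- pass to a programmed grammar in NSF via the preceding theorem, then rebuild the simulation with three components using prolongation chains calibrated by the function $f$ of Theorem~\ref{thm:nsf} --- is the same as the paper's, which says precisely to ``go back to the simulation of the programmed grammar given in the preceding theorem'' and to prolong using NSF. Nevertheless there are two genuine gaps. First, the detour through programmed grammars cannot cover the case $\Delta=\mathord{\leq}$ of part~1: the plain $\leq m$ mode does not belong to the mode set $FI$ of the preceding theorem, and in fact $\mathcal{L}_{m}(\mathrm{CD}_{\infty },\mathrm{CF}[-\lambda ],\leq m)\subseteq \mathcal{L}(\mathrm{CF})$, because a component in the $\leq m$ mode may always stop after a single step, so the system is equivalent to the context-free grammar obtained by uniting all components. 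Since $\mathcal{L}_{m}(\mathrm{P},\mathrm{CF})$ contains non-context-free languages for $m\geq 2$, a padded three-component simulation of an NSF programmed grammar in the $\leq m$ mode cannot exist; concretely, nothing in the $\leq m$ mode forces a component to finish your padding chains, so the ``pad up to exactly $m$'' discipline is unenforceable. This case needs (and admits) a separate, essentially trivial argument; the paper instead disposes of part~1 by carrying over the known ``three is enough'' constructions from the literature.

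Second, and more centrally, the one genuinely hard point in passing from the unboundedly many components $P_{p,q}$ of the preceding theorem to just three is exactly the point your argument leaves unresolved: once all pairs $(p,q)$ live inside a single component $P_{1}$, distinct nonterminal occurrences of a $p$-tagged sentential form may commit to \emph{different} successors $q_{1}\neq q_{2}$ within one application of $P_{1}$. Your claim that ``any other choice leaves a dead nonterminal that cannot be rewritten'' does not hold for your own construction: $P_{2}$ necessarily contains removal rules for \emph{every} intermediate tag $\langle p,q\rangle $, so it converts a mixed sentential form into one carrying both $q_{1}$- and $q_{2}$-tags, after which $P_{1}$ can simulate the rules labelled $q_{1}$ and $q_{2}$ simultaneously --- a derivation with no counterpart in the programmed grammar. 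The step counts do not detect this either, since each nonterminal occurrence is processed for the same number of steps irrespective of which $q$ it selects. To close the argument you must exhibit a mechanism that makes inconsistent successor choices non-terminating --- for instance by concentrating the choice of $q$ in the single prolonged chain and exploiting that, by NSF, the multiset of nonterminals present before an application of $q$ is fixed by $f(q)$, so that a mixed sentential form makes the exact step count $m$ (or the $t$-condition) unattainable in all subsequent rounds --- and then prove this invariant by induction.
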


\begin{proof}
Since we restrict our attention to languages of index~$m$, we can simply
carry over the proofs of the type ``three is enough'' for $t$-mode
components, see \cite[Theorem~3.10]{Csuetal94all} and \cite[Lemma~2]{Mit93}.

In case of the $(t\wedge =m)$-mode, we have to go back to the simulation of
the programmed grammar given in the preceding theorem. It is clear that, due
to the nonterminal separation form (NSF) (see Theorem~\ref{thm:nsf}), we can
prolong the simulation of a single nonterminal symbol.
\end{proof}

It is quite natural to compare the families of languages defined by CD
grammar systems obtained \textit{via} the restriction of being of finite
index $m$ with their unrestricted counterparts. In our case, it is
interesting to see that also these unrestricted counterparts deliver
languages of finite index. However, as we have seen in Theorem~\ref{SnkinCD}%
, 
\begin{equation*}
S_{n\cdot k}\in \mathcal{L}(\mathrm{CD}_{n+1},\mathrm{CF}[-\lambda
],(t\wedge =k+1)).
\end{equation*}%
Especially, we have 
\begin{equation*}
S_{2\cdot (m-1)}\in \mathcal{L}(\mathrm{CD}_{3},\mathrm{CF}[-\lambda
],(t\wedge =m)).
\end{equation*}

Since $S_{2\cdot (m-1)}$ is not of (programmed) index $2m-3$, we can state:

\begin{corollary}
For $m\in {\mathbb{N}}$, we have 
\begin{equation*}
\mathcal{L}_{m}(\mathrm{CD}_{3},\mathrm{CF}[-\lambda ],(t\wedge =m))\subset 
\mathcal{L}(\mathrm{CD}_{3},\mathrm{CF}[-\lambda ],(t\wedge =m)).
\end{equation*}
\end{corollary}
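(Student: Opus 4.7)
The plan is to exhibit a single language witnessing strictness, following the hint spelled out in the passage immediately preceding the corollary. The inclusion $\subseteq$ is immediate from the definition of $\mathcal{L}_m$, so all the content is in the strictness.

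First I would compute what the left-hand side actually is. By the theorem immediately preceding (``three components suffice'') together with the big characterization theorem at the top of Section~\ref{sec-CDGS-finite-index}, and noting that $(t\wedge =m)$ lies in the set $FI$, we obtain
\[
\mathcal{L}_{m}(\mathrm{CD}_{3},\mathrm{CF}[-\lambda],(t\wedge =m))
   \;=\;\mathcal{L}_{m}(\mathrm{CD}_{\infty },\mathrm{CF}[-\lambda],(t\wedge =m))
   \;=\;\mathcal{L}_{m}(\mathrm{P},\mathrm{CF}[-\lambda]).
\]
So the task reduces to producing a language that is generated by some three-component CDGS working in the $(t\wedge =m)$-mode but whose minimal programmed index strictly exceeds $m$.

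For $m\ge 3$ this is exactly what the author set up just before the statement. Applying Theorem~\ref{SnkinCD} with $n=2$ and $k=m-1$ places $S_{2(m-1)}\in\mathcal{L}(\mathrm{CD}_{3},\mathrm{CF}[-\lambda],(t\wedge =m))$, while Theorem~\ref{seplan} says $S_{2(m-1)}\notin\mathcal{L}_{2m-3}(\mathrm{P},\mathrm{CF}[-\lambda])$, i.e.\ its minimum programmed index is $2m-2$. Since $2m-2>m$ for $m\ge 3$, the language $S_{2(m-1)}$ lies outside $\mathcal{L}_{m}(\mathrm{P},\mathrm{CF}[-\lambda])$, hence outside the left-hand side, establishing strictness in this regime.

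The hard part will be the boundary cases $m=1$ and $m=2$, where the proposed witness either is undefined or already has programmed index equal to~$m$. For $m=1$ the situation is easy: the non-linear language $\{\,a^{n}b^{n}a^{k}b^{k}\mid n,k\ge 1\,\}$ from Lemma~\ref{lem-cdn} lies in $\mathcal{L}(\mathrm{CD}_{3},\mathrm{CF}[-\lambda],(t\wedge =1))$ but not in $\mathcal{L}_{1}(\mathrm{P},\mathrm{CF}[-\lambda])=\mathcal{L}(\mathrm{LIN})$, so it serves as a separator. The case $m=2$ is the genuinely delicate one, as one must exhibit a three-component $(t\wedge =2)$-system generating a language of programmed index at least~$3$; a natural candidate is a three-pair variant $\{\,a^{n}b^{n}a^{k}b^{k}a^{\ell}b^{\ell}\mid n,k,\ell\ge 1\,\}$ or an appropriate prolongation of the $m=1$ witness, but checking simultaneously that three components in $(t\wedge =2)$-mode suffice and that the programmed index exceeds $2$ is the step most likely to require real work.
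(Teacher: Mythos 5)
Your treatment of $m\ge 3$ and $m=1$ is exactly the paper's argument: $S_{2(m-1)}$ via Theorem~\ref{SnkinCD} (with $n=2$, $k=m-1$) combined with Theorem~\ref{seplan} for $m\ge 3$, and the language $\{\,a^{n}b^{n}a^{k}b^{k}\mid n,k\in\mathbb{N}\,\}$ from Lemma~\ref{lem-cdn} for $m=1$. The problem is $m=2$, which you explicitly leave open, and this is a genuine gap rather than a routine verification: the corollary claims strictness for \emph{all} $m\in\mathbb{N}$, so without a witness for $m=2$ the proof is incomplete. Worse, your first proposed candidate $\{\,a^{n}b^{n}a^{k}b^{k}a^{\ell}b^{\ell}\mid n,k,\ell\ge 1\,\}$ cannot work: the three blocks are mutually independent, so a programmed (indeed context-free) grammar can expand them one at a time while holding a single placeholder nonterminal for the unprocessed suffix, keeping at most two nonterminals in every sentential form. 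That language therefore has index $2$ and lies \emph{inside} $\mathcal{L}_{2}(\mathrm{P},\mathrm{CF}[-\lambda])$, i.e., inside the left-hand side.

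The paper's fix is to take $S_{3}=\{\,b(a^{i}b)^{6}\mid i\ge 1\,\}$, whose programmed index is exactly $3$ by Theorem~\ref{seplan} (the six $a^{i}$-blocks must carry the \emph{same} exponent, which is what forces the index up, unlike in your candidate), so that $S_{3}\notin\mathcal{L}_{2}(\mathrm{P},\mathrm{CF}[-\lambda])\supseteq\mathcal{L}_{2}(\mathrm{CD}_{3},\mathrm{CF}[-\lambda],(t\wedge=2))$. The remaining work — which is the actual content of the $m=2$ case and is not supplied by Theorem~\ref{SnkinCD}, since that theorem only yields $S_{2}$ for three components in the $(t\wedge=2)$-mode — is an explicit ad hoc three-component $(t\wedge=2)$ grammar system generating $S_{3}$; the paper writes one down (components $P_{1},P_{2},P_{3}$ over nonterminals $S,A,B,C$ and primed variants, with a trap symbol $F$ enforcing the correct interleaving). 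You would need to produce such a construction, or an equivalent witness of index $\ge 3$ generable by three $(t\wedge=2)$-components, to close the argument.
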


\begin{proof}
Our previous considerations deliver the case $m>2$, since $S_{2(m-1)}$ is
not a (programmed) language of index $2m-3$, and therefore not a
(programmed) language of index $m$; hence, 
\begin{equation*}
S_{2\cdot (m-1)}\notin \mathcal{L}_{m}(\mathrm{CD}_{3},\mathrm{CF}[-\lambda
],(t\wedge =m)).
\end{equation*}%
In case $m=2$, we know that 
\begin{equation*}
S_{3}\notin \mathcal{L}_{2}(\mathrm{CD}_{3}, \mathrm{CF}[-\lambda ],(t\wedge
=2)).
\end{equation*}

On the other hand, the $(\mathrm{CD}_{3},\mathrm{CF}[-\lambda ],(t\wedge
=2)) $ grammar system 
\begin{equation*}
G=(\{S,A,B,A^{\prime },B^{\prime }C,C^{\prime },B^{\prime \prime
},F\},\{a,b\},S,P_{1},P_{2},P_{3})
\end{equation*}%
with the following three components generates $S_{3}$, starting with $S$: 
\begin{eqnarray*}
P_{1} &=&\{A\rightarrow aA^{\prime }a,A\rightarrow aba,B\rightarrow
aB^{\prime }a,B\rightarrow B^{\prime \prime }\} \\
P_{2} &=&\{B^{\prime }\rightarrow B,C\rightarrow aC^{\prime }a,A\rightarrow
F,B^{\prime \prime }\rightarrow aba,C\rightarrow aba\} \\
P_{3} &=&\{S\rightarrow bAbBbCb,C^{\prime }\rightarrow C,A^{\prime
}\rightarrow A,B^{\prime }\rightarrow F\}.
\end{eqnarray*}

Finally, we have $\mathcal{L}_{1}(\mathrm{CD}_{3},\mathrm{CF}[-\lambda ],
(t\wedge =1))=\mathcal{L}(\mathrm{LIN})$. The example $\{%
\,a^{n}b^{n}a^{m}b^{m}\mid n,m\in {\mathbb{N}}\,\}\notin \mathcal{L}(\mathrm{%
LIN})$ in Lemma~\ref{lem-cdn} was shown to be in $\mathcal{L}(\mathrm{CD}%
_{3},\mathrm{CF}[-\lambda ],(t\wedge =1))$.
\end{proof}

If we admit four components (or arbitrarily many), by a similar reasoning we
can separate all corresponding classes, since $3(m-1)=3m-3>m$ for $m\geq 2$.

\begin{corollary}
For $m,n\in {\mathbb{N}}$, $n\geq 4$ (or $n=\infty $), we have 
\begin{equation*}
\mathcal{L}_{m}(\mathrm{CD}_{n},\mathrm{CF}[-\lambda ],(t\wedge =m))\subset 
\mathcal{L}(\mathrm{CD}_{n},\mathrm{CF}[-\lambda ],(t\wedge =m)).
\end{equation*}
\end{corollary}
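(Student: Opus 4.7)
The plan is to mirror the proof of the preceding corollary, but now we exploit the availability of $n \geq 4$ components to get a stronger witness language from Theorem~\ref{SnkinCD}. The hint already offered by the authors, namely $3(m-1) > m$ for $m \geq 2$, makes the separation transparent once the right embedding into programmed grammars of finite index is in place.

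First, I would record the necessary upper bound. Inspecting the construction in the proof of Theorem~\ref{inclusion}, the simulating programmed grammar introduces only the trap symbol $F$, which by design never appears in a terminating derivation. Hence if the underlying CD grammar system already has CD-index $m$, the simulating programmed grammar has programmed index at most $m$; combined with Theorem~\ref{thm:nsf} this yields
\[
\mathcal{L}_m(\mathrm{CD}_n, \mathrm{CF}[-\lambda], (t \wedge =m)) \subseteq \mathcal{L}_m(\mathrm{P}, \mathrm{CF}[-\lambda]),
\]
valid also for $n = \infty$, since any single grammar system uses only finitely many components.

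For the main case $m \geq 2$, apply Theorem~\ref{SnkinCD}(1) with parameters $n_0 := 3$ and $k := m-1$ to obtain
\[
S_{3(m-1)} \in \mathcal{L}(\mathrm{CD}_4, \mathrm{CF}[-\lambda], (t \wedge =m)) \subseteq \mathcal{L}(\mathrm{CD}_n, \mathrm{CF}[-\lambda], (t \wedge =m))
\]
for every $n \geq 4$ (and trivially for $n = \infty$). On the other hand, Theorem~\ref{seplan} asserts $S_{3(m-1)} = S_{3m-3} \notin \mathcal{L}_{3m-4}(\mathrm{P}, \mathrm{CF})$, and the inequality $3m-4 \geq m$ holds precisely for $m \geq 2$. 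Hence $S_{3(m-1)} \notin \mathcal{L}_m(\mathrm{P}, \mathrm{CF})$, and the upper bound displayed above shows $S_{3(m-1)} \notin \mathcal{L}_m(\mathrm{CD}_n, \mathrm{CF}[-\lambda], (t \wedge =m))$, establishing strict inclusion.

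The case $m = 1$ needs a separate argument, precisely as at the end of the preceding corollary. Any CD derivation of index $1$ keeps at most one nonterminal in every sentential form, so every rule actually used has at most one nonterminal on its right-hand side, yielding $\mathcal{L}_1(\mathrm{CD}_n, \mathrm{CF}[-\lambda], (t \wedge =1)) \subseteq \mathcal{L}(\mathrm{LIN})$. By contrast, Lemma~\ref{lem-cdn} places the non-linear language $\{\,a^n b^n a^m b^m \mid n,m \in \mathbb{N}\,\}$ in $\mathcal{L}(\mathrm{CD}_3, \mathrm{CF}[-\lambda], (t \wedge =1)) \subseteq \mathcal{L}(\mathrm{CD}_n, \mathrm{CF}[-\lambda], (t \wedge =1))$ for $n \geq 4$, providing the required witness. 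The only delicate point in the whole plan is verifying that the simulation of Theorem~\ref{inclusion} really preserves the index parameter $m$ on terminating derivations (rather than merely bounding it by $n \cdot m$); everything else is bookkeeping already implicit in the three-component case.
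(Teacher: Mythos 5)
Your proposal is correct and takes essentially the same route as the paper, which settles this corollary with the one-line remark that $S_{3(m-1)}$ witnesses the separation for $m\geq 2$ because $3(m-1)=3m-3>m$. You have simply spelled out the same ingredients used in the preceding corollary's proof for $n=3$: Theorem~\ref{SnkinCD}(1) instantiated with three ``blocks'' and $k=m-1$, the index-preserving simulation giving $\mathcal{L}_{m}(\mathrm{CD}_{n},\mathrm{CF}[-\lambda ],(t\wedge =m))\subseteq \mathcal{L}_{m}(\mathrm{P},\mathrm{CF}[-\lambda ])$, the non-membership from Theorem~\ref{seplan}, and the $\mathcal{L}(\mathrm{LIN})$ argument together with Lemma~\ref{lem-cdn} for the case $m=1$.
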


Observe that the results exhibited in the last two corollaries are quite
astonishing if one keeps in mind that 
\begin{eqnarray*}
\mathcal{L}_{\mathit{fin\/}}(\mathrm{P},\mathrm{CF}) &=&\bigcup_{m\in {%
\mathbb{N}}}\mathcal{L}_{m}(\mathrm{CD}_{n},\mathrm{CF}[-\lambda ],(t\wedge
=m)) \\
&=&\bigcup_{m\in {\mathbb{N}}}\mathcal{L}(\mathrm{CD}_{n},\mathrm{CF}%
[-\lambda ],(t\wedge =m))
\end{eqnarray*}%
for all $n\in {\mathbb{N}},n>2$.

\section{Conclusions and Prospects}

In this paper we have studied CD grammar systems working in the internally
hybrid modes $(t\wedge =m)$ and $(t\wedge \leq m)$ together with the finite
index restriction. Showing specific relations to programmed grammars of
finite index, we were able to establish infinite hierarchies for CD grammar
systems of finite index working in the internally hybrid modes $(t\wedge =k)$
and $(t\wedge \leq k)$ both with respect to the number of components $n$ and
the number of maximal steps $k$. However, many quite natural questions still
remain open. For instance, Theorem~\ref{thm-borhol99} leaves open the
strictness of several natural inclusion relations relating the parameters
\textquotedblleft numbers of components\textquotedblright\ $n$ and
\textquotedblleft step number bound\textquotedblright\ $k$.

It is well-known that ET0L systems are tightly related to CD grammar systems
working in the $t$-mode. In the literature, several step-bound restrictions
have been discussed in relation with parallel systems, see~\cite{Fer03a} for
an overview. Are these somehow related to the internally hybrid systems
discussed in this paper (and their companions)? Or do hybrid modes lead to
new (natural) derivation modes for parallel systems? In particular, the
finite index restriction studied in Section~\ref{sec-CDGS-finite-index}
could be of interest in this context, although (and also because) we are not
aware of a study of finite index in the context of limited parallel
rewriting, which might be an interesting research question in its own.

\paragraph{Acknowledgements}

Most of the research of the first and last author was undertaken while being
affiliated to Wilhelm-Schickard Institut f\"ur Informatik, Universit\"at
T\"ubingen, Sand~13, D-72076 T\"ubingen, Germany. Part of the research of
the first author was supported by Deutsche Forschungsgemeinschaft, grant DFG
La 618/3-1/2 ``Kom\-ple\-xit\"ats\-theo\-re\-ti\-sche Methoden f\"ur die
ad\"aquate Modellierung paralleler Be\-rech\-nun\-gen.''

\bibliographystyle{eptcs}
\bibliography{AFLCD}

\end{document}